\theoremstyle{plain}
\newtheorem{theorem}{Theorem}
\newtheorem{corollary}[theorem]{Corollary}
\newtheorem{lemma}[theorem]{Lemma}
\newtheorem{claim}[theorem]{Claim}
\newtheorem{fact}[theorem]{Fact}
\newtheorem{question}[theorem]{Question}
\newtheorem{definition}[theorem]{Definition}
\theoremstyle{plain}
\newcommand{\complexityclass}[1]{{\mathsf{#1}}\xspace}
\renewcommand{\P}{\complexityclass{P}}
\newcommand{\NP}{\complexityclass{NP}}
\newcommand{\BQP}{\complexityclass{BQP}}
\newcommand{\MIP}{\complexityclass{MIP}}
\newcommand{\DQP}{\complexityclass{DQP}}
\newcommand{\QMA}{\complexityclass{QMA}}
\newcommand{\PDQP}{\complexityclass{PDQP}}
\newcommand{\naCQP}{\complexityclass{naCQP}}
\newcommand{\CQP}{\complexityclass{CQP}}
\newcommand{\qpoly}{\complexityclass{qpoly}}
\newcommand{\polycomplexity}{\complexityclass{poly}}
\newcommand{\ALL}{\complexityclass{ALL}}
\newcommand{\SZK}{\complexityclass{SZK}}
\newcommand{\QCMA}{\complexityclass{QCMA}}
\newcommand{\NEXP}{\complexityclass{NEXP}}
\newcommand{\PSPACE}{\complexityclass{PSPACE}}
\newcommand{\PDQMA}{\complexityclass{PDQMA}}
\renewcommand{\Pr}{\mathop{\bf Pr\/}}
\newcommand{\val}{\mathop{\bf val\/}}
\newcommand{\poly}{\mathrm{poly}}
\newcommand{\cyes}{\textnormal{C}_{\textnormal{YES}}}
\newcommand{\pyes}{\textnormal{P}_{\textnormal{YES}}}
\newcommand{\pno}{\textnormal{P}_{\textnormal{NO}}}
\newcommand{\GapCSP}{\textnormal{\textsc{GapCSP}}}
\newcommand{\nul}{\nu_{\text{low}}}
\newcommand{\nuh}{\nu_{\text{high}}}
\newcommand{\Vprotocol}{\mathsf{V}}
\newcommand{\density}{\mathsf{Density}}
\newcommand{\quasichecked}{\mathsf{QuasiCheck}_{\epsilon, \Delta}}
\newcommand{\quasicheck}{\mathsf{QuasiCheck}}
\begin{document}
\title{Superposition detection and QMA with non-collapsing measurements}
 \author{Roozbeh Bassirian}
  \affiliation{University of Chicago}
 \author{Kunal Marwaha}
 \affiliation{University of Chicago}
 \email{kmarw@uchicago.edu}
\maketitle
\begin{abstract}
We prove that $\QMA$ where the verifier may also make a single \emph{non-collapsing} measurement~\cite{aaronson2014space} is equal to $\NEXP$, resolving an open question of Aaronson~\cite{scott_blogpost_pdqp}. We show this is a corollary to a modified proof of $\QMA^+ = \NEXP$~\cite{bassirian2023quantum}. 
At the core of many results inspired by Blier and Tapp~\cite{blier2010quantum} is an unphysical property testing problem 
deciding whether a quantum state is close to an element of a fixed basis.
\end{abstract}
\section{Introduction}
Hidden-variable theories attempt to explain the probabilistic nature of quantum mechanics as downstream effects of parameters beyond the quantum state. If these parameters are accessible to a quantum computer, unusually strong computation is possible. Aaronson and collaborators~\cite{aaronson_dqp,aaronson2014space} proposed 
and studied enhancements to $\BQP$
inspired by these theories.

One such upgrade is the power to make ``non-collapsing'' measurements, exemplified through the class $\CQP$ (Collapse-free Quantum Polynomial-time)~\cite{aaronson2014space}.
After applying each gate or (collapsing) intermediate measurement, a $\CQP$ machine receives a measurement of the current quantum state in the computational basis \emph{without perturbing the state}. \cite{aaronson2014space} proved that such a machine can solve problems in $\SZK$, and in a black-box setting, solve unstructured search faster than Grover's algorithm (but not in polynomial time). This power is dramatically increased with quantum \emph{advice}: Aaronson showed that $\CQP/\qpoly = \ALL$~\cite{aaronson2018pdqpqpoly}, which contrasts with 
$\BQP/\qpoly \subseteq \QMA/\polycomplexity$~\cite{Aaronson_2005_bqpqpoly,aaronson2013characterization}. 

A related concept to quantum advice is quantum \emph{verification} with an untrusted witness; see \cite{Gharibian_2023} for a recent survey. One particularly mysterious version of such a complexity class is $\QMA(2)$~\cite{qma2_defn}: quantum verification where the witness is guaranteed to be \emph{separable}. Even after much effort~\cite{blier2010quantum,aaronson2008power,chen2010short,pereszlényi2012multiprover,harrow2013testing,qma2_yirka,she2022unitary}, little is known about the power of $\QMA(2)$ besides the trivial bounds $\QMA \subseteq \QMA(2) \subseteq \NEXP$.

Recently, \cite{qma2plus} proposed a new approach to show $\QMA(2) = \NEXP$ by introducing the class $\QMA^+(2)$, where the witness is also guaranteed to have \emph{non-negative amplitudes} in the computational basis.
They proved that $\QMA^+(2) = \NEXP$, and suggested the possibility of $\QMA(2) = \QMA^+(2)$.
However, any such result must rely on the separability guarantee of $\QMA(2)$, since it was later shown that $\QMA^+ = \NEXP$~\cite{bassirian2023quantum}.\footnote{$\QMA = \QMA^+ = \NEXP$ would imply, for example, that $\PSPACE = \NEXP$.} In particular, quantum verification with only a \emph{non-negative amplitudes} guarantee is maximally powerful.

Very recently, Aaronson asked the following question:
\begin{question}[\cite{scott_blogpost_pdqp}]
\label{question}
    What is the power of $\QMA$ where the verifier can make non-collapsing measurements?
\end{question}
This computational model is trivially upper-bounded by $\NEXP$, yet Aaronson conjectured it to be \emph{equal} to $\NEXP$. In this work, we show this conjecture to be true, even with a \emph{single} non-collapsing measurement! 
Notably, our proof is inspired by the proof of $\QMA^+ = \NEXP$~\cite{bassirian2023quantum}.
To unify these results, we identify a property testing problem we call \emph{superposition detection}: can one distinguish the set of computational basis states $\mathcal{B}$ from states $\epsilon$-far from $\mathcal{B}$?
The latter states are in \emph{superposition} over computational basis states.
We show that a quantum verifier has the power of $\NEXP$ if it can test this property on arbitrary states --- more specifically, states that are obtained after a collapsing measurement on a quantum witness.

A note for the cautious reader: with a single copy, this property testing task is physically unrealistic --- it even results in distinguishability of statistically equal ensembles  (see \Cref{fact:sup_detection_is_impossible}). 
Any successful detector must access information beyond the quantum state.
We show that a quantum computer equipped with a ``non-collapsing'' measurement can always detect superposition.
On the other hand, $\QMA^+$ only considers a subset of quantum witnesses, all of which admit efficient superposition detection.
As a result, both computational models have the power of $\NEXP$.

With multiple copies, superposition detection is possible --- but generating the copies seems to require \emph{post-selection}~\cite{aaronson2004quantum,kinoshita2018qma2}, inverse exponential precision~\cite{pereszlényi2012multiprover}, or simply too many quantum witnesses~\cite{aaronson2008power,chen2010short,chiesa2011improved}. These are exactly the issues that have plagued approaches inspired by Blier and Tapp~\cite{blier2010quantum} to prove $\QMA(2) = \NEXP$.

\subsection{Techniques}
Our techniques primarily follow that of~\cite{qma2plus,bassirian2023quantum}, which are inspired by the work of Blier and Tapp~\cite{blier2010quantum}. We describe the general approach and then note the differences in this work.

The central challenge is to place an $\NEXP$-complete problem in a variant of $\QMA$. Following~\cite{qma2plus,bassirian2023quantum}, we use a \emph{succinct} constraint satisfaction problem with \emph{constant} gap. In $(1,\delta)$-$\GapCSP$, either all constraints can be satisfied, or at most a $\delta$ fraction of constraints can be satisfied. This problem is $\NEXP$-hard due to the PCP theorem~\cite{AS92,ALMSS98,harsha2004robust}.\footnote{As we do not consider $\NP$-completeness in this work, we often omit the word \emph{succinct} when describing $(1,\delta)$-$\GapCSP$.}

Consider a witness over two registers: the \emph{constraint index} register with a polynomial number of qubits, and a constant-sized \emph{value} register.
The protocol chooses one of two tests to run with some probability. The first test checks the \emph{rigidity} of a quantum state, i.e. accepting if the witness is close to the form $\frac{1}{\sqrt{R}}\sum_{j \in [R]}\ket{j}\ket{\sigma(j)}$. The second test verifies the \emph{constraints} of the CSP by measuring the state in the computational basis and verifying the constraint indexed by the outcome of the first register.
Using several tricks from the PCP literature~\cite{dinur2007pcp,qma2plus}, the second test is able to decide $(1,\delta)$-$\GapCSP$ whenever the witness is \emph{rigid}.

Surprisingly, the \emph{non-negative amplitudes} guarantee is only used in the \emph{first} test, i.e. checking the \emph{rigidity} of the witness. 
With some probability, the verifier encourages a large $\ell_1$-norm by measuring the overlap with the uniform superposition state $\ket{+}$. 
It must now enforce that the witness sends at most \emph{one} basis element in the \emph{value} register per basis element in the \emph{constraint index} register. For example, after measuring the \emph{constraint index} register in the computational basis, the verifier must reject any superposition in the remaining quantum state. \cite{qma2plus, bassirian2023quantum} can enforce this property assuming the witness has non-negative amplitudes, thus deciding the $\NEXP$-complete problem.

In this work, we consider any algorithm that detects \emph{superposition} over constant-sized computational basis states:
\begin{definition}[Superposition detector]
    A \emph{superposition detector} $\mathsf{SupDetect}_{k,\epsilon,\Delta}$ is an algorithm that inputs a quantum state $\ket{\psi}$ on $k$ qubits and outputs $1$ or $0$, such that:
    \begin{itemize}
        \item If $\ket{\psi}$ is a computational basis state on $k$ qubits, then $\mathsf{SupDetect}_{k,\epsilon,\Delta}(\ket{\psi}) = 1$ with probability $1$.
        \item If $|\braket{e}{\psi}|^2 \le 1 - \epsilon$ for every computational basis state $\ket{e}$ on $k$ qubits, then $\mathsf{SupDetect}_{k,\epsilon,\Delta}(\ket{\psi}) = 1$ with probability at most $1-\Delta < 1$.
    \end{itemize}
\end{definition}
We show that if a quantum computer can implement a superposition detector over the \emph{value} register of the quantum witness, then it can check \emph{rigidity} of the witness, and thus decide the $\NEXP$-complete problem. Since the \emph{value} register is constant-sized, we only need a superposition detector where $k$ is some constant.

We use this generalization in two places. 
First, we show that a \emph{single} non-collapsing measurement allows a quantum computer to detect superposition on \emph{any} quantum state, and thus decide any problem in $\NEXP$. This resolves \Cref{question}. 
Curiously, in both the verification and advice setting~\cite{aaronson2018pdqpqpoly}, the power of non-collapsing measurements is magnified when the verifier is given a state it cannot prepare.

Second, we show that if the witness has non-negative amplitudes, a quantum computer can efficiently detect superposition; this reproves the statement $\QMA^+ = \NEXP$~\cite{bassirian2023quantum}.
The \emph{superposition detection} viewpoint gives new intuition for this result:
$\QMA^+$ and $\QMA^+(2)$ \emph{pre-select} quantum witnesses where the verifier can detect superposition.
However, the verifier cannot detect superposition in \emph{all} quantum witnesses unless $\QMA \ne \NEXP$. 
As a consequence,  any proof of $\NEXP \subseteq \QMA(2)$ must handle \emph{some} witnesses in a fundamentally new way.

\subsection{Our results}
We first prove the following statement:
\begin{theorem}[Main theorem]
\label{thm:no_superposition_detectors}
Consider a variant of $\QMA$ where given any constant $k$ and $\epsilon > 0$, 
the verifier is also allowed to partially measure the quantum witness in the computational basis, and then apply $\mathsf{SupDetect}_{k,\epsilon,\Delta}$ for some $\Delta = \Omega(\frac{1}{\poly(n)})$.
Then this variant contains $\NEXP$.
\end{theorem}
This has two direct consequences.
First, since a superposition detector can be efficiently implemented on all quantum states if the verifier can make non-collapsing measurements, we resolve~\Cref{question}:
\begin{corollary}
\label{cor:qma_cqp_equals_nexp}
    Consider the variant of $\QMA$ where the verifier is allowed to make \emph{one} non-collapsing measurement at any point in the computation. Then this variant equals $\NEXP$, even if the measurement outcome can only be used at the end of the computation.
\end{corollary}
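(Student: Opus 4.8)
The plan is to show that \Cref{cor:qma_cqp_equals_nexp} follows from \Cref{thm:no_superposition_detectors} by implementing a suitable superposition detector $\mathsf{SupDetect}_{k,\epsilon,\Delta}$ using a single non-collapsing measurement. First I would recall the model: by \Cref{thm:no_superposition_detectors}, it suffices to exhibit, for every constant $k$ and $\epsilon > 0$, a procedure that (i) always outputs $1$ on a computational basis state of $k$ qubits, and (ii) outputs $1$ with probability at most $1 - \Delta$ (for some $\Delta = \Omega(1/\poly(n))$, in fact $\Delta = \Omega(1)$ here since $k$ is constant) whenever the input state has squared overlap at most $1-\epsilon$ with every computational basis state. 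The upper bound $\subseteq \NEXP$ is immediate since a single non-collapsing measurement reveals at most polynomially many classical bits, so I would only argue the $\supseteq$ direction.

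The detector is the natural one: on input $\ket{\psi}$ on $k$ qubits, make one non-collapsing measurement of all $k$ qubits in the computational basis to obtain a classical string $x \in \{0,1\}^k$, then \emph{collapsingly} measure the same $k$ qubits in the computational basis to obtain $y \in \{0,1\}^k$, and output $1$ if and only if $x = y$. If $\ket{\psi} = \ket{e}$ is a basis state, then $x = e = y$ with probability $1$, giving perfect completeness. If $|\braket{e}{\psi}|^2 \le 1-\epsilon$ for all $e$, write $\ket{\psi} = \sum_e \alpha_e \ket{e}$; the non-collapsing measurement returns $x$ with probability $|\alpha_x|^2$, and conditioned on $x$ the subsequent collapsing measurement returns $y = x$ with probability $|\alpha_x|^2$, so $\Pr[x=y] = \sum_e |\alpha_e|^4 \le \max_e |\alpha_e|^2 \le 1-\epsilon$. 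Thus the detector has soundness parameter $\Delta = \epsilon = \Omega(1)$, which more than suffices. I would then remark that this detector also works when the state to be tested is the post-measurement \emph{value} register of the QMA witness, as required by the hypothesis of \Cref{thm:no_superposition_detectors}: the non-collapsing measurement there is applied to the joint system, but since we only read off and compare the value-register outcomes, and the non-collapsing measurement does not perturb the state, the analysis is unchanged — the prior partial (collapsing) measurement of the constraint-index register has already been performed, leaving a well-defined residual state on the value register to which the above argument applies.

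The only subtlety — and the one point I would be careful about — is bookkeeping the \emph{order and number} of operations so that exactly \emph{one} non-collapsing measurement is used across the whole computation, and arguing the final clause that ``the measurement outcome can only be used at the end.'' For the latter, note the detector as described uses the non-collapsing outcome $x$ immediately; to defer it, the verifier instead records $x$ in a fresh classical register, continues the protocol using only collapsing measurements (including the collapsing measurement producing $y$, which it also records), and at the very end compares the stored $x$ and $y$ and incorporates the result into the accept/reject decision. Since \Cref{thm:no_superposition_detectors}'s protocol invokes the superposition detector only once (on the constant-size value register), exactly one non-collapsing measurement is consumed. I expect no genuine obstacle here — the content is entirely in \Cref{thm:no_superposition_detectors} and in the observation, already flagged in the introduction via \Cref{fact:sup_detection_is_impossible}, that non-collapsing measurements give access to information beyond the quantum state, which is precisely what makes $\sum_e |\alpha_e|^4$ (rather than $1$) the collision probability.
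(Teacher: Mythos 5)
Your proposal is correct and matches the paper's own proof: the paper's \Cref{claim:noncollapse_to_superposition_detector} implements exactly this detector (one non-collapsing measurement followed by one collapsing measurement, accepting iff they agree), bounds the acceptance probability by the collision probability $\sum_e |\alpha_e|^4$, and then combines it with \Cref{thm:no_superposition_detectors} together with the same exponential-time classical simulation for the upper bound and the same deferral argument for the non-adaptive condition. The only (immaterial) difference is that you bound $\sum_e |\alpha_e|^4 \le \max_e|\alpha_e|^2 \le 1-\epsilon$ to get $\Delta=\epsilon$, whereas the paper uses the slightly sharper $t^2+(1-t)^2$ to get $\Delta = 2(x-x^2)$ with $x=\min(\epsilon,2^{-k})$; both yield a constant $\Delta$, which suffices.
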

Second, since a superposition detector can be implemented for all quantum witnesses with \emph{non-negative amplitudes}, we recover the main result of~\cite{bassirian2023quantum}:
\begin{corollary}[\cite{bassirian2023quantum}]
\label{cor:qmaplus_equals_nexp}
$\QMA^+ = \NEXP$.
\end{corollary}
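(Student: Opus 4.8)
The plan is to derive $\NEXP \subseteq \QMA^+$ from \Cref{thm:no_superposition_detectors}; the reverse inclusion $\QMA^{+}\subseteq\NEXP$ is standard (guess an inverse-exponentially-precise description of the witness and evaluate the verifier) and is already established in \cite{bassirian2023quantum}. By \Cref{thm:no_superposition_detectors} it suffices to give a polynomial-time quantum verifier that, \emph{when the witness has non-negative amplitudes}, can simulate the extra power used there: a partial computational-basis measurement of the witness followed by a call to $\mathsf{SupDetect}_{k,\epsilon,\Delta}$ on the constant-size value register, for the given constants $k,\epsilon$ and some $\Delta=\Omega(1)$. This is the right reduction for two reasons. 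First, that reduction only ever measures (a sub-register of) the witness in the computational basis before invoking the detector, so a non-negative-amplitude witness hands the detector a state with non-negative amplitudes; hence it is enough to detect superposition over \emph{non-negative} states. Second, the honest rigid witness $\frac{1}{\sqrt{R}}\sum_{j}\ket{j}\ket{\sigma(j)}$ of that reduction already has non-negative amplitudes, so it is a legal $\QMA^{+}$ witness, and $\QMA^{+}$ soundness only demands rejecting non-negative witnesses --- exactly the regime the detector below covers.

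The detector uses the overlap with the uniform superposition. Write $\ket{\psi}=\sum_{x\in\binset^{k}}\alpha_{x}\ket{x}$ with all $\alpha_{x}\ge 0$ and $\sum_{x}\alpha_{x}^{2}=1$. Then $\braket{+^{k}}{\psi}=2^{-k/2}\sum_{x}\alpha_{x}$, so $2^{k}\abs{\braket{+^{k}}{\psi}}^{2}=\norm{\psi}_{1}^{2}$, which equals $1$ when $\ket{\psi}$ is a computational basis state. If instead $\abs{\braket{e}{\psi}}^{2}\le 1-\epsilon$ for every basis state $\ket{e}$, then $\alpha_{x}\le\sqrt{1-\epsilon}$ for all $x$, and Cauchy--Schwarz gives $1=\sum_{x}\alpha_{x}^{2}\le(\max_{x}\alpha_{x})\norm{\psi}_{1}\le\sqrt{1-\epsilon}\,\norm{\psi}_{1}$, so $2^{k}\abs{\braket{+^{k}}{\psi}}^{2}=\norm{\psi}_{1}^{2}\ge 1+\epsilon$. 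The detector therefore applies $H^{\otimes k}$ to the value register and measures in the computational basis; with probability $p$ (a constant to be fixed) it outputs $0$ on the all-zeros outcome and $1$ otherwise, and with probability $1-p$ it outputs $1$. Since $k=O(1)$ this is polynomial-time, it outputs $1$ with probability $1-p\,2^{-k}$ on any computational basis state, and with probability at most $1-p\,2^{-k}(1+\epsilon)$ on any state that is $\epsilon$-far from all basis states --- i.e.\ a relaxation of $\mathsf{SupDetect}_{k,\epsilon,\Delta}$ with completeness $1-p\,2^{-k}$ and $\Delta=p\,2^{-k}(1+\epsilon)$.

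The main obstacle is precisely that this detector does not have \emph{perfect} completeness, and no genuine quantum detector can: if an instrument never rejects a computational basis state then its ``reject'' POVM element has zero diagonal, hence is zero, so it never rejects any state (this is the obstruction behind \Cref{fact:sup_detection_is_impossible}). So the remaining work is to check that the analysis of \Cref{thm:no_superposition_detectors} still goes through with the relaxed detector. It does: the $\mathsf{SupDetect}$ call is one branch of the protocol, taken with probability $p$, so its net effect on the completeness--soundness gap is to subtract its completeness error $p\,2^{-k}$ while subtracting at least $p\,2^{-k}(1+\epsilon)$ from the acceptance probability of the $\epsilon$-far value registers it is designed to catch, a net gain of at least $p\,2^{-k}\epsilon>0$; choosing $p,\epsilon$ and the value-register size $k$ to be suitable constants keeps the overall gap $\Omega(1/\poly(n))$, which is then amplified. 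Together with $\QMA^{+}\subseteq\NEXP$, this gives $\QMA^{+}=\NEXP$.
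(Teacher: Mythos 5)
Your proposal is correct and takes essentially the same route as the paper: \Cref{claim:qmaplus} defines exactly your detector (a Fourier-basis measurement rejecting on the uniform-superposition outcome, i.e.\ an $\ell_1$-norm test valid because partial computational-basis measurement preserves non-negativity), and handles the unavoidable lack of perfect completeness the same way you do, by observing that the real test's acceptance probability is a state-independent additive constant below that of an ideal $\mathsf{SupDetect}_{k,\epsilon,\Delta}$, so the promise gap of \Cref{thm:no_superposition_detectors} is unaffected. The only differences are cosmetic (your lazy-test parameter $p$ and a slightly different constant for $\Delta$).
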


\subsection{Related work}
\paragraph{$\QMA$ and hidden-variable theories}
Aaronson initiated the study of quantum complexity subject to hidden-variable theories in~\cite{aaronson_dqp}. They introduced the class $\DQP$ (Dynamical Quantum Polynomial-time), and showed evidence that this class was more powerful than $\BQP$.
However, $\DQP$ has subtleties in its definition which make it particularly challenging to study.
To address this, \cite{aaronson2014space} introduced the concept of ``non-collapsing'' measurement through the classes $\CQP$ and $\naCQP$ (non-adaptive $\CQP$).\footnote{$\naCQP$ was called $\PDQP$ (Product Dynamical Quantum Polynomial-time) in a preprint of \cite{aaronson2014space}, and again in subsequent work~\cite{aaronson2018pdqpqpoly,aaronson_pdqma} We use the name $\naCQP$.}
They showed evidence that these classes were only ``slightly more powerful'' than $\BQP$. By contrast, \cite{aaronson2018pdqpqpoly} showed that this power dramatically increased in the presence of quantum advice: $\naCQP/\qpoly = \ALL \ne \BQP/\qpoly$. Aharonov and Regev~\cite{aharonov2003lattice} define a variant of $\QMA$,\footnote{Confusingly, this variant is called $\QMA+$, not to be confused with $\QMA^+$ of \cite{qma2plus}.} which can be interpreted as allowing non-collapsing measurements on the accept/reject qubit at the end of the computation. This variant is equal to $\QMA$; see further discussion in \cite{aaronson2013characterization,scott_blogpost_pdqp}.
Non-collapsing measurements are also connected to ``rewindable'' quantum computation; see~\cite{hiromasa}.

\paragraph{The complexity class $\QMA(2)$}
The power of $\QMA(2)$ is not sensitive to the \emph{number} of separable partitions: \cite{harrow2013testing} showed that $\QMA(2) = \QMA(k)$ for $k$ at most polynomial in $n$ using a primitive called the \emph{product test}. 
One reason to believe that $\QMA(2)$ is powerful is that $\QMA(2)$ can decide $\NP$ using only logarithmic-sized proofs~\cite{blier2010quantum}; by contrast, $\QMA$ with logarithmic-sized proofs is equal to $\BQP$.
However, since the $\QMA(2)$ protocol has a subconstant promise gap, it cannot (so far) be scaled to $\NEXP$~\cite{chen2010short,chiesa2011improved,aaronson2008power,pereszlényi2012multiprover,GallNN12}.
If there exist efficient algorithms that ``disentangle'' quantum states, then $\QMA = \QMA(2)$; \cite{aaronson2008power,akibue2024hardness} show some evidence that no ``disentanglers'' exist. Although $\QMA^+ = \QMA^+(2) = \NEXP$~\cite{qma2plus,bassirian2023quantum}, there exist constants $c,s$ where $\QMA^+_{c,s} \subseteq \QMA$ and  $\QMA^+(2)_{c,s} \subseteq \QMA(2)$. 
In fact, there exists a promise gap for $\QMA^+(2)$ where a sharp transition in complexity occurs (assuming there is a transition)~\cite{jeronimo2024dimension}; it is unknown if the same is true for $\QMA^+$.

\paragraph{Concurrent work}
\cite{aaronson_pdqma} simultaneously and independently study the power of $\QMA$ with non-collapsing measurements. 
They show that this computational model is equal to $\NEXP$, also resolving \Cref{question}. However, their proof uses $O(n \log n)$ non-collapsing measurements, as opposed to just \emph{one} in this work.
Notably, they use different techniques, building on the proofs of $\MIP = \NEXP$~\cite{mip_nexp} and $\naCQP/\qpoly = \ALL$~\cite{aaronson2018pdqpqpoly}. They also explore the power of $\DQP$~\cite{aaronson_dqp} in the contexts of quantum verification and quantum advice. Note that one must make minor adjustments to the definitions of $\CQP$ and $\naCQP$~\cite{aaronson2014space} to allow non-collapsing measurement on a quantum \emph{witness}. \cite{aaronson_pdqma} makes a slightly different choice than this work; we discuss this technicality further in \Cref{sub:qma_cqp_equals_nexp}.

\section{Preliminaries}
We first define $\QMA_{c,s}$. When $c,s$ are not specified, $c-s$ is allowed to be at least any inverse polynomial in input size. In general, the promise gap can be amplified via parallel repetition or with more clever methods~\cite{marriott2005quantum,nagaj2009fast}.
\begin{definition} [$\QMA_{c,s}$]
Let  $c,s: \mathbb{N} \rightarrow \mathbb{R}^+$ be polynomial-time computable functions. A promise problem $\mathcal{L}_{\text{yes}}, \mathcal{L}_{\text{no}} \subseteq \{0, 1\}^*$ is in  $\QMA_{c,s}$ if there exists a $\BQP$ verifier $V$ such that for every $n \in \mathbb{N}$ and every $x \in \{0, 1\}^n$:
\begin{itemize}
\item \textbf{Completeness:} if $x \in \mathcal{L}_{\text{yes}}$, then there exists a state $\ket{\psi}$ on $\poly(n)$ qubits, s.t.\ 
\begin{align*}\Pr[V(x, \ket{\psi}) \text{ accepts}] \geq c(n)\,.\end{align*}
\item \textbf{Soundness:} If $x \in \mathcal{L}_{\text{no}}$, then for every state $\ket{\psi}$ on $\poly(n)$ qubits, we have
\begin{align*}\Pr[V(x, \ket{\psi}) \text{ accepts}] \leq s(n)\,.\end{align*}
\end{itemize}
\end{definition}
We prove \Cref{thm:no_superposition_detectors} using the $\NEXP$-complete problem proposed by~\cite{qma2plus}.
\begin{definition}[CSP system]
    A \emph{$(N, R, q, \Sigma)$-CSP system} $\mathcal{C}$ on $N$ variables with values in $\Sigma$ consists of a set (possibly a multi-set) of $R$ constraints where the arity of each constraint is exactly $q$.
\end{definition}
\begin{definition}[Value of CSP]
    The \emph{value} of a $(N, R, q, \Sigma)$-CSP system $\mathcal{C}$ is the maximum fraction of satisfiable constraints over all possible assignments $\sigma : [N] \to \Sigma$. The value of $\mathcal{C}$ is denoted $\val(\mathcal{C})$.
\end{definition}
\begin{definition}[$\GapCSP$]
    The \emph{$(1, \delta)$-$\GapCSP$ problem} takes as input a $(N,R,q,\Sigma)$-CSP system $\mathcal{C}$. The task is to distinguish whether $\mathcal{C}$ is such that (in completeness) $\val(\mathcal{C}) = 1$ or (in soundness) $\val(\mathcal{C}) \le \delta$.
\end{definition}
\begin{theorem}[\cite{qma2plus}]
\label{thm:nexphardsystem}
There exist constants $q$ and $\delta < 1$ and a family of succinct $(N_i, R_i, q, \Sigma = \{0,1\})$-CSP systems $\{\mathcal{C}_i\}_{i \ge 1}$ for which the $(1, \delta)$-$\GapCSP$ problem is $\NEXP$-complete.
\end{theorem}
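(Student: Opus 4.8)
The plan is to derive \Cref{thm:nexphardsystem} as the standard ``scaled-up'' form of the PCP theorem, repackaged in the language of succinct CSP systems. Fix an $\NEXP$-complete language $\mathcal{L}$ (for concreteness, succinct $3$-colorability or succinct $3$-$\mathsf{SAT}$). Scaling the PCP theorem $\NP = \mathsf{PCP}[O(\log n), O(1)]$ up by one exponential~\cite{mip_nexp, AS92, ALMSS98, harsha2004robust}, there is a probabilistic verifier $V$ that, on input $x \in \{0,1\}^n$, tosses $r(n) = \poly(n)$ random coins, reads exactly $q = O(1)$ bits of a proof string $\pi \in \{0,1\}^{2^{\poly(n)}}$, runs in time $\poly(n)$, and satisfies: if $x \in \mathcal{L}$ then some $\pi$ is accepted with probability $1$, and if $x \notin \mathcal{L}$ then every $\pi$ is accepted with probability at most $s$, for a fixed constant $s < 1$. (Queries may be padded to be exactly $q$, and repeated queries avoided, so each random string induces a constraint on an ordered $q$-tuple of distinct proof coordinates; the alphabet is binary because $\pi$ is a bit string.)

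Given such an $x$, I would define the CSP system $\mathcal{C}_x$ on the $N = 2^{\poly(n)}$ variables indexing the coordinates of $\pi$, with one constraint for each random string $\rho \in \{0,1\}^{r(n)}$ (so $R = 2^{r(n)}$): the constraint for $\rho$ has scope the $q$-tuple queried by $V(x;\rho)$, and it is satisfied exactly when $V$ accepts on the corresponding answers. This system is \emph{succinct}: the map $\rho \mapsto (\mathrm{scope}(\rho), \mathrm{predicate}(\rho))$ is computed by a $\poly(n)$-size circuit --- namely the circuit simulating $V(x;\cdot)$ --- since $V$ runs in time $\poly(n)$ and $q$ is constant (so the predicate is given by a constant-size truth table). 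Setting $\delta := s$, an assignment $\sigma : [N] \to \{0,1\}$ is literally a candidate proof, and the fraction of constraints it satisfies equals the acceptance probability of $V$ on $\pi = \sigma$; hence $\val(\mathcal{C}_x) = 1$ when $x \in \mathcal{L}$ and $\val(\mathcal{C}_x) \le \delta$ when $x \notin \mathcal{L}$. So $x \mapsto \mathcal{C}_x$ is a polynomial-time reduction from $\mathcal{L}$ to $(1,\delta)$-$\GapCSP$ on succinct $(N, R, q, \{0,1\})$-CSP systems, establishing $\NEXP$-hardness for the fixed constants $q$ and $\delta$.

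For the matching upper bound, I would observe that $(1,\delta)$-$\GapCSP$ on succinct systems lies in $\NEXP$: a nondeterministic machine guesses an assignment $\sigma : [N] \to \{0,1\}$ (exponentially many bits, which is permitted), then in time $R \cdot \poly(n) = 2^{\poly(n)}$ enumerates all constraints, using the succinct circuit to recover each scope and predicate, and accepts iff every constraint is satisfied. An instance with $\val = 1$ is accepted by the correct guess, whereas an instance with $\val \le \delta < 1$ is rejected by every guess. Combined with the reduction above, this shows the promise problem is $\NEXP$-complete, with the family $\{\mathcal{C}_i\}_{i \ge 1}$ taken to range over all succinct $(N_i, R_i, q, \{0,1\})$-CSP systems.

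I expect the only genuine subtlety to be bookkeeping rather than mathematics: one must invoke a version of the PCP theorem that simultaneously delivers (i) the scaling from $\NP$ to $\NEXP$, (ii) perfect completeness, (iii) a fixed constant query count $q$ and a fixed constant soundness $s < 1$, and (iv) a binary proof alphabet, and then check that the verifier-simulating circuit is of size $\poly(n)$ so that $\mathcal{C}_x$ is indeed succinct. Some standard formulations instead state the $\NEXP$ PCP with a larger alphabet or with two-query projection tests; converting to the form above costs only a constant-factor alphabet reduction and query padding, which changes $q$ and $\delta$ by constants and leaves the statement intact.
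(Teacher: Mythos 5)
Your proof is correct and is essentially the argument the paper itself relies on: \Cref{thm:nexphardsystem} is quoted from \cite{qma2plus} and attributed to the scaled-up PCP theorem, which is exactly the reduction you give --- encode the exponential-size, constant-query, perfect-completeness PCP verifier's checks as a succinct binary CSP (succinctness coming from the $\poly(n)$-time verifier circuit), and pair it with the trivial guess-and-check $\NEXP$ upper bound. The only caveat worth noting is that the version proved in \cite{qma2plus} additionally guarantees structural properties of the CSP family (e.g., regularity of the constraint system) that are implicitly used downstream in the $\mathsf{ConstraintCheck}$ protocol of \Cref{lemma:rigid_means_qma_nexp}; these are not part of the statement as reproduced here, so your argument fully establishes the statement as written.
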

In our proof, we use the notion of \emph{rigid} quantum states from \cite{bassirian2023quantum}. These states map each constraint to a single assignment in $\Sigma^q$.
\begin{definition}[Rigid and quasirigid states, \cite{bassirian2023quantum}]
\label{defn:rigid}
Given a $(N, R, q, \Sigma)$-CSP system $\mathcal{C}$, consider
quantum states over a \emph{constraint index} register of dimension $R$ and a \emph{value} register of dimension $\kappa := |\Sigma|^q$, i.e.
\begin{align*}
    \ket{\psi} = \sum_{j \in [R], x \in \Sigma^q} a_{j,x} \ket{j}\ket{x}\,,
\end{align*}
for some $a_{j,x} \in \mathbb{C}$ so that $\braket{\psi}{\psi} = 1$. Then $\ket{\psi}$ is called \emph{quasirigid} if there is a function $\sigma: [R] \to \Sigma^q$ and set of complex numbers $\{b_j\}_{j \in [R]}$ such that $\ket{\psi} =  \sum_{j \in [R]} b_j \ket{j} \ket{\sigma(j)}$, and moreover called \emph{rigid} if all $b_j = \frac{1}{\sqrt{R}}$.
\end{definition}
\cite{bassirian2023quantum} showed a $\QMA$ protocol for $(1,\delta)$-$\GapCSP$, \emph{assuming} the quantum witness is a \emph{rigid} state.
In other words, the protocol is sound \emph{only} for rigid quantum witnesses.
\begin{lemma}[\cite{bassirian2023quantum}]
\label{lemma:rigid_means_qma_nexp}
 There exist constants $0 < \xi \le \cyes < 1$  and a $\QMA_{\cyes, \cyes - \xi}$ protocol named $\mathsf{ConstraintCheck}$ such that for every $(1,\delta)$-$\GapCSP$ instance $\mathcal{I}$ guaranteed by \Cref{thm:nexphardsystem}, $\mathsf{ConstraintCheck}$ decides $\mathcal{I}$ \emph{assuming} the quantum witness is a \emph{rigid} state.
\end{lemma}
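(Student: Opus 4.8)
The plan is to treat a rigid witness $\ket{\psi}=\frac{1}{\sqrt{R}}\sum_{j\in[R]}\ket{j}\ket{\sigma(j)}$ (as in \Cref{defn:rigid}) as an encoding of a deterministic \emph{local assignment} $\sigma\colon[R]\to\Sigma^q$: measuring both registers of $\ket{\psi}$ in the computational basis returns a uniformly random constraint index $j$ together with its local view $\sigma(j)$, and nothing else. So the protocol should only need to verify one random constraint against its local view --- provided the CSP has first been put into a form where doing so already distinguishes $\val=1$ from $\val\le\delta$. I would therefore start by applying the standard PCP preprocessing of \cite{dinur2007pcp,qma2plus} to the $\NEXP$-complete family of \Cref{thm:nexphardsystem}: make the constraint (hyper)graph degree-regular and expanding, and --- the crucial step --- compose the constraints so that each constraint of the transformed system reads a constant-size bundle of original local views and also checks that the bundle is internally consistent on shared variables. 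Because $q$ and $|\Sigma|$ remain constant, the value register stays constant-size; because the instance stays succinct, each transformed constraint is computable from its index in $\poly(n)$ time; and the transformation maps $\val=1$ to $\val=1$ and $\val\le\delta$ to $\val\le\delta'$ for some constant $\delta'<1$.

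The protocol $\mathsf{ConstraintCheck}$ is then: given the purported rigid witness, measure both registers in the computational basis to obtain $(j,x)$, compute the transformed constraint indexed by $j$, and accept iff $x$ satisfies it (a $\BQP$ computation, by succinctness). For completeness, a global satisfying assignment $\tau\colon[N]\to\Sigma$ of a $\val=1$ instance induces the canonical rigid witness whose local views are restrictions of $\tau$; every transformed constraint is then satisfied, so the verifier accepts with probability $1$, and relabelling constants puts the protocol in $\QMA_{\cyes,\cyes-\xi}$ with $0<\xi\le\cyes<1$. For soundness restricted to rigid witnesses, let $\val(\mathcal{C})\le\delta$, let $\sigma$ be the local assignment read off an arbitrary rigid witness, and decode a global assignment $\tau$ from $\sigma$ by plurality vote over the bundled views of each variable. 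If the verifier accepted with probability more than $1-\xi'$ for a suitable constant $\xi'$, then all but a $\xi'$-fraction of the transformed constraints are both locally satisfied and internally consistent; the standard PCP soundness argument (using the expansion/regularity and the soundness of the composition) then shows $\tau$ satisfies more than a $\delta'$-fraction, hence more than a $\delta$-fraction, of the original constraints --- contradicting $\val(\mathcal{C})\le\delta$. Thus every rigid witness is accepted with probability at most $\cyes-\xi$ for a genuine constant $\xi$.

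The delicate point --- and the main obstacle --- is exactly this composition step: detecting global inconsistency of the local views of a \emph{single} copy of a rigid witness, using one (essentially classical) measurement. A direct pairwise consistency test between two overlapping constraints would need two independent samples, which a lone $\QMA$ witness cannot supply; the fix is to fold all consistency checking into the constraint structure, so that an inconsistent decoded assignment provably forces a constant fraction of the transformed constraints to be locally unsatisfiable. Making this quantitative requires invoking the PCP machinery of \cite{dinur2007pcp,qma2plus} carefully enough that every composition/robustization step costs only a constant factor in the gap (so that $\xi$ is a constant independent of $n$) while preserving succinctness, so that the final object is a legitimate $\poly(n)$-time $\QMA$ verifier acting on a $\poly(n)$-qubit constraint-index register and a constant-size value register.
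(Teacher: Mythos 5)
The paper does not reprove this lemma; it imports it verbatim from \cite{bassirian2023quantum}, so the comparison is against that construction. Your completeness argument, your use of the PCP preprocessing (regularization, expansion, constant arity and alphabet, succinctness), and your plurality-decoding step all match the cited approach, and you correctly isolate the crux: a lone rigid witness yields only \emph{one} sample $(j,\sigma(j))$, so global consistency of the local views cannot be cross-checked against a second sample. The problem is your proposed resolution of that crux --- folding consistency into the classical constraint structure so that ``an inconsistent decoded assignment provably forces a constant fraction of the transformed constraints to be locally unsatisfiable.'' No classical preprocessing can achieve this. Unless a constant fraction of the constraints of a NO instance are individually unsatisfiable (in which case the instance is decidable by classical sampling with no witness at all), a cheating prover simply sets $\sigma(j)$ to an arbitrary internally consistent, locally satisfying assignment of the $j$-th (bundled) constraint, chosen independently for each $j$. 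The resulting state is exactly rigid, and a verifier that measures once and checks the indexed constraint accepts with probability $1$ whether or not $\val(\mathcal{C})\le\delta$. Bundling only enlarges the constraints; global inconsistency is a relation \emph{between} two local views and is invisible in any single one. This is the same obstruction that forces two provers in $\MIP=\NEXP$ and two unentangled witnesses in the Blier--Tapp line of work, so your soundness claim for rigid witnesses is false as argued.

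The missing idea is that $\mathsf{ConstraintCheck}$ must exploit the \emph{quantum coherence} of the rigid state across the constraint-index register --- this is precisely what the rigidity hypothesis buys, and it is how the cited construction proceeds. Schematically: for a variable $v$, one coherently computes the value $\sigma(j)_v$ that constraint $j$ assigns to $v$ into an ancilla, and then tests via a Fourier-basis/overlap measurement whether the constraint-index register, restricted to the constraints containing $v$, remains in uniform superposition. It does iff the ancilla factors out, i.e.\ iff all constraints containing $v$ agree on $v$; disagreement destroys the interference and is detected with constant probability because the amplitudes are uniform. The expander and regularity structure supplied by the PCP preprocessing is then what converts ``most variables are consistently assigned'' into ``the plurality-decoded global assignment satisfies most original constraints,'' which is the point at which your decoding argument becomes valid. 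So the skeleton of your proposal is right, but without a genuinely quantum consistency test inside $\mathsf{ConstraintCheck}$ the protocol you describe has no soundness even against rigid witnesses.
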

\cite{qma2plus,bassirian2023quantum} considered $\QMA^+$, a variant of $\QMA$ such that the quantum witness always has \emph{non-negative amplitudes}. Given this condition, \cite{bassirian2023quantum} designed a protocol to ensure the quantum witness is a \emph{rigid} state, and so $\NEXP \subseteq \QMA^+$. In this work, we design a similar protocol assuming the existence of efficient superposition detectors.
We repeatedly make use of the following fact:
\begin{fact}[\cite{fuchs1998cryptographic}]
\label{fact}
    Let $0 \le \Pi \le \mathbb{I}$ be a positive semi-definite matrix, and let $\ket{\psi_1}$ and $\ket{\psi_2}$ be quantum states such that $|\braket{\psi_1}{\psi_2}|^2 \ge 1 - d$. Then $|\bra{\psi_1}\Pi\ket{\psi_1} - \bra{\psi_2}\Pi\ket{\psi_2}| \le \sqrt{d}$.
\end{fact}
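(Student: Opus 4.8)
The plan is to reduce the claim to a spectral statement about the single Hermitian operator $\rho_2 - \rho_1$, where $\rho_i \deff \ket{\psi_i}\bra{\psi_i}$. Writing the quantity of interest as $\bra{\psi_1}\Pi\ket{\psi_1} - \bra{\psi_2}\Pi\ket{\psi_2} = \mathrm{Tr}(\Pi(\rho_1 - \rho_2))$, everything hinges on controlling $\mathrm{Tr}(\Pi M)$ for the fixed traceless Hermitian operator $M \deff \rho_1 - \rho_2$ subject to $0 \le \Pi \le \mathbb{I}$. If $\ket{\psi_1}$ and $\ket{\psi_2}$ are proportional the statement is trivial, so I would assume they are linearly independent; then $M$ is supported on their two-dimensional span and has rank exactly two.

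The key step is to compute the spectrum of $M$ exactly. Since $\mathrm{Tr}(M) = 1 - 1 = 0$ and $M$ has rank two, its nonzero eigenvalues are $\lambda_+ = \lambda$ and $\lambda_- = -\lambda$ for some $\lambda \ge 0$. To pin down $\lambda$ I would use the Hilbert--Schmidt norm: expanding $\mathrm{Tr}(M^2) = \mathrm{Tr}(\rho_1^2) - 2\,\mathrm{Tr}(\rho_1\rho_2) + \mathrm{Tr}(\rho_2^2) = 2\big(1 - |\braket{\psi_1}{\psi_2}|^2\big)$ and matching against $\mathrm{Tr}(M^2) = \lambda_+^2 + \lambda_-^2 = 2\lambda^2$ yields $\lambda = \sqrt{1 - |\braket{\psi_1}{\psi_2}|^2}$. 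Consequently $M = \lambda\big(\ket{u}\bra{u} - \ket{v}\bra{v}\big)$ for some orthonormal pair $\ket{u}, \ket{v}$.

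Finally I would substitute this decomposition back in: $\mathrm{Tr}(\Pi M) = \lambda\big(\bra{u}\Pi\ket{u} - \bra{v}\Pi\ket{v}\big)$. The constraint $0 \le \Pi \le \mathbb{I}$ forces $\bra{u}\Pi\ket{u}, \bra{v}\Pi\ket{v} \in [0,1]$, so their difference lies in $[-1,1]$, giving $|\mathrm{Tr}(\Pi M)| \le \lambda = \sqrt{1 - |\braket{\psi_1}{\psi_2}|^2} \le \sqrt{d}$, where the last step uses the hypothesis $|\braket{\psi_1}{\psi_2}|^2 \ge 1 - d$. I expect the only real subtlety to be obtaining the \emph{tight} constant $\sqrt{d}$: a naive expansion of $\ket{\psi_2}$ in an orthonormal basis containing $\ket{\psi_1}$, followed by the triangle inequality and Cauchy--Schwarz on the cross term $\bra{\psi_1}\Pi\ket{\psi_1^\perp}$, only yields a bound of order $d + 2\sqrt{d}$. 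The spectral decomposition of $M$ is exactly what collapses the separate error contributions into the single factor $\lambda$, and recognizing that $|\bra{u}\Pi\ket{u} - \bra{v}\Pi\ket{v}| \le 1$ (rather than bounding each term on its own) is what recovers the sharp inequality. As a consistency check, this matches the operational reading of the bound: $\sqrt{1 - |\braket{\psi_1}{\psi_2}|^2}$ is precisely the trace distance between the pure states $\rho_1$ and $\rho_2$, and $\max_{0 \le \Pi \le \mathbb{I}}|\mathrm{Tr}(\Pi(\rho_1-\rho_2))|$ equals that trace distance.
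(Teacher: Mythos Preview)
Your proof is correct and follows essentially the same route as the paper: the paper writes $|\bra{\psi_1}\Pi\ket{\psi_1} - \bra{\psi_2}\Pi\ket{\psi_2}| = |\Tr(\Pi(\rho_1 - \rho_2))|$, bounds this by the trace distance, and uses the pure-state formula $\sqrt{1 - |\braket{\psi_1}{\psi_2}|^2}$. You simply unpack both of those ingredients explicitly via the spectral decomposition of $\rho_1 - \rho_2$, which is exactly how one proves those two facts in the first place.
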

\begin{proof}
    $|\bra{\psi_1}\Pi\ket{\psi_1} - \bra{\psi_2}\Pi\ket{\psi_2}| = |\Tr(\Pi(\ketbra{\psi_1}{\psi_1} - \ketbra{\psi_2}{\psi_2}))|$ is at most the trace distance of $\ketbra{\psi_1}{\psi_1}$ and $\ketbra{\psi_2}{\psi_2}$, which equals $\sqrt{1 - |\braket{\psi_1}{\psi_2}|^2} \le \sqrt{d}$.
\end{proof}

\section{Proof of main theorem}
Our protocol depends on $R, q, \Sigma$ that determine the $(N, R, q, \Sigma)$-CSP system $\mathcal{C}$. It considers quantum witnesses over two registers: a \emph{constraint index} register of dimension $R$ and a \emph{value} register of dimension $\kappa := |\Sigma|^q$, i.e.
\begin{align*}
    \ket{\psi} = \sum_{j \in [R], x \in \Sigma^q} a_{j,x} \ket{j}\ket{x}\,,
\end{align*}
for some $a_{j,x} \in \mathbb{C}$ so that $\braket{\psi}{\psi} = 1$. Before describing the protocol, we define two tests, $\density$ and $\quasicheck$, that act on the quantum witness. We use them in tandem to ensure the witness is \emph{rigid}.
\begin{definition}
    $\density$ takes as input a quantum state, measures it in the Fourier basis, and accepts iff the observed basis element is $\ket{+} := \frac{1}{\sqrt{R \cdot \kappa}} \sum_{j \in [R], x \in \Sigma^q} \ket{j,x}$.
\end{definition}
\begin{definition}
    $\quasichecked$ (with $\epsilon, \Delta > 0$) takes as input a quantum state, measures the \emph{constraint index} register in the computational basis, then sends the remaining quantum state to a superposition detector $\mathsf{SupDetect}_{\log \kappa, \epsilon,\Delta}$, and accepts iff the output of $\mathsf{SupDetect}_{\log \kappa, \epsilon,\Delta}$ is $1$.
\end{definition}
$\density$ is straightforward: it checks the overlap of a quantum state with the uniform superposition state. Let $A(\Vprotocol, \ket{\psi})$ be the acceptance probability of test $\Vprotocol$ when input the quantum state $\ket{\psi}$.
\begin{fact}
\label{fact:density_info}
     $A(\density, \ket{\psi}) = |\braket{+}{\psi}|^2$. Moreover, if $\ket{\psi}$ is \emph{rigid}, then $A(\density, \ket{\psi}) = \frac{1}{\kappa}$.
\end{fact}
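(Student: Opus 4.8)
The plan is to read both claims straight off the definition of $\density$ together with the Born rule. First I would note that $\ket{+} = \frac{1}{\sqrt{R\kappa}}\sum_{j\in[R],\,x\in\Sigma^q}\ket{j,x}$ is exactly the trivial-character (zero-frequency) vector of the Fourier basis over the abelian group indexing the two registers, say $\Z_R\times\Z_\kappa$; in particular $\ket{+}$ is one of the orthonormal vectors in the basis that $\density$ measures in. Hence ``$\density$ accepts on $\ket{\psi}$'' is literally the event that the outcome of this projective measurement is the label of $\ket{+}$, which by the Born rule occurs with probability $|\braket{+}{\psi}|^2$. This establishes $A(\density,\ket{\psi}) = |\braket{+}{\psi}|^2$.

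For the ``moreover'' part I would substitute the rigid form $\ket{\psi} = \frac{1}{\sqrt R}\sum_{j\in[R]}\ket{j}\ket{\sigma(j)}$ into $\braket{+}{\psi}$ and expand. Orthonormality of the computational basis kills every cross term, leaving $R$ surviving terms, each equal to $\frac{1}{\sqrt{R\kappa}}\cdot\frac{1}{\sqrt R}$; summing gives $\braket{+}{\psi} = \frac{R}{\sqrt R\,\sqrt{R\kappa}} = \frac{1}{\sqrt\kappa}$, so $A(\density,\ket{\psi}) = |\braket{+}{\psi}|^2 = \frac{1}{\kappa}$ by the first part.

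I do not expect any real obstacle here: the only point requiring a moment of care is that the phrase ``accepts iff the observed basis element is $\ket{+}$'' presupposes that $\ket{+}$ is a genuine member of the (orthonormal) Fourier basis, which is why I would spell out the group structure on the register indices before invoking the Born rule. I would also remark that the computation of $1/\kappa$ uses nothing about $\sigma$ beyond its being a function $[R]\to\Sigma^q$, so the value $1/\kappa$ is attained by every rigid state — which is exactly what is needed when this fact is used downstream.
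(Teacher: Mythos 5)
Your proof is correct and is exactly the computation the paper has in mind; the paper states this fact without proof, treating both the Born-rule identity and the substitution of the rigid form $\frac{1}{\sqrt{R}}\sum_{j}\ket{j}\ket{\sigma(j)}$ (yielding $\braket{+}{\psi}=\frac{1}{\sqrt{\kappa}}$) as immediate. Your added care about $\ket{+}$ being a genuine member of the orthonormal Fourier basis, and the observation that the value $1/\kappa$ is independent of $\sigma$, are both accurate and consistent with how the fact is used downstream.
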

$\quasichecked$ collapses the \emph{constraint index} register of the input, and ensures that there is no superposition over the \emph{value} register. A quasirigid state perfectly satisfies this test.
\begin{fact}
\label{fact:quasicheck_info}
If $\ket{\psi}$ is \emph{quasirigid}, then $A(\quasichecked, \ket{\psi}) = 1$. For arbitrary $\ket{\psi}$, 
\begin{align*}
    A(\quasichecked, \ket{\psi}) \le  \sum_{j \in [R]} \left( \sum_{x \in \Sigma^q} |a_{j,x}|^2 \right) \cdot  c_{j,\epsilon}  = (1 - \Delta) + \Delta \cdot\hspace{-4mm} \sum_{j \in [R]; c_{j, \epsilon} = 1} \left( \sum_{x \in \Sigma^q} |a_{j,x}|^2\right) \,,
\end{align*}
where $c_{j,\epsilon} = 1-\Delta$ if $\max_{x \in \Sigma^q} |a_{j,x}|^2 \le (1-\epsilon) \sum_{x \in \Sigma^q} |a_{j,x}|^2$, and $1$ otherwise.
\end{fact}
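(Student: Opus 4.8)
The plan is to directly track the two stages of $\quasichecked$ acting on the witness $\ket{\psi} = \sum_{j \in [R], x \in \Sigma^q} a_{j,x}\ket{j}\ket{x}$. First I would record that measuring the \emph{constraint index} register in the computational basis yields outcome $j$ with probability $p_j := \sum_{x \in \Sigma^q}|a_{j,x}|^2$, and, when $p_j > 0$, collapses the \emph{value} register to the normalized state $\ket{\phi_j} := p_j^{-1/2}\sum_{x \in \Sigma^q} a_{j,x}\ket{x}$; the test then accepts iff $\mathsf{SupDetect}_{\log\kappa,\epsilon,\Delta}(\ket{\phi_j}) = 1$. Since $\sum_j p_j = \braket{\psi}{\psi} = 1$, this gives $A(\quasichecked,\ket{\psi}) = \sum_{j:\, p_j > 0} p_j \cdot \Pr[\mathsf{SupDetect}_{\log\kappa,\epsilon,\Delta}(\ket{\phi_j}) = 1]$.

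For the quasirigid case, $\ket{\psi} = \sum_j b_j \ket{j}\ket{\sigma(j)}$ means $a_{j,x} = b_j$ when $x = \sigma(j)$ and $0$ otherwise, so whenever $p_j = |b_j|^2 > 0$ the collapsed state $\ket{\phi_j}$ is exactly the computational basis state $\ket{\sigma(j)}$. The first bullet of the superposition detector definition then forces $\mathsf{SupDetect}$ to output $1$ with probability $1$, whence $A(\quasichecked,\ket{\psi}) = \sum_j p_j = 1$.

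For the general bound, the key observation is that for a computational basis state $\ket{x}$ of the \emph{value} register, $\braket{x}{\phi_j} = p_j^{-1/2}a_{j,x}$, so $|\braket{x}{\phi_j}|^2 = |a_{j,x}|^2/p_j$; hence the hypothesis ``$|\braket{e}{\phi_j}|^2 \le 1-\epsilon$ for every computational basis state $\ket{e}$'' of the second bullet holds precisely when $\max_{x \in \Sigma^q}|a_{j,x}|^2 \le (1-\epsilon)\,p_j$, which is exactly the condition defining $c_{j,\epsilon} = 1-\Delta$. For those $j$ the detector accepts with probability at most $1-\Delta = c_{j,\epsilon}$, and for every other $j$ I use the trivial bound $\Pr[\cdot] \le 1 = c_{j,\epsilon}$; summing yields $A(\quasichecked,\ket{\psi}) \le \sum_j p_j c_{j,\epsilon}$, the first inequality (recalling $p_j = \sum_x |a_{j,x}|^2$). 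The stated equality then follows by splitting the sum: writing $S := \sum_{j:\, c_{j,\epsilon}=1} p_j$ and using $\sum_j p_j = 1$, we get $\sum_j p_j c_{j,\epsilon} = (1-S)(1-\Delta) + S = (1-\Delta) + \Delta S$.

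There is no real obstacle here; the only subtlety worth flagging is the ``intermediate'' regime of the superposition detector --- states $\ket{\phi_j}$ that are neither computational basis states nor $\epsilon$-far from all of them --- on which the definition leaves the acceptance probability unspecified, so it can only be bounded by $1$. This is exactly why the estimate is one-sided and why the $c_{j,\epsilon}=1$ constraints contribute in full. A minor bookkeeping point is that outcomes with $p_j = 0$ never occur and contribute $0$ regardless of how $\mathsf{SupDetect}$ behaves on the (undefined) collapsed state, so they may be discarded throughout.
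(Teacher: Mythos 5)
Your proof is correct and is exactly the argument the paper leaves implicit (the paper states this as a Fact with no written proof): condition on the measurement outcome $j$, identify the collapsed value-register state, apply the two bullets of the $\mathsf{SupDetect}$ definition to get the per-$j$ bound $c_{j,\epsilon}$, and split the sum using $\sum_j p_j = 1$ to obtain the stated equality. Nothing is missing.
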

In a formal sense, $\quasichecked$ ensures that the input is close to a quasirigid state:
\begin{lemma}
\label{lemma:validity_means_close_to_quasirigid}
    Suppose $A(\quasichecked, \ket{\psi}) = w$. Then there exists a \emph{quasirigid} state $\ket{\phi}$ such that $|\braket{\psi}{\phi}|^2 \ge (1-\epsilon)\frac{w-(1-\Delta)}{\Delta}$.
\end{lemma}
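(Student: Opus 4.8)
The plan is to read off from $w$ the set of constraint indices on which the superposition detector is necessarily unreliable, and then assemble a quasirigid state out of the dominant value-register basis element on each such index.

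First I would make the action of $\quasicheck$ explicit. Measuring the \emph{constraint index} register of $\ket{\psi}$ in the computational basis yields outcome $j$ with probability $p_j := \sum_{x \in \Sigma^q} |a_{j,x}|^2$, after which the \emph{value} register is left in the normalized state $\ket{\psi_j} := p_j^{-1/2} \sum_{x} a_{j,x}\ket{x}$ (when $p_j > 0$), so that $w = A(\quasichecked, \ket{\psi}) = \sum_j p_j \Pr[\mathsf{SupDetect}_{\log\kappa,\epsilon,\Delta}(\ket{\psi_j}) = 1]$. This is exactly the quantity bounded in \Cref{fact:quasicheck_info} by $\sum_j p_j c_{j,\epsilon}$, and rearranging $w \le (1-\Delta) + \Delta \sum_{j:\, c_{j,\epsilon}=1} p_j$ gives $\sum_{j \in S} p_j \ge \tfrac{w - (1-\Delta)}{\Delta}$, where $S := \{\, j : \max_{x} |a_{j,x}|^2 > (1-\epsilon)\, p_j \,\}$ is the set of indices on which $\mathsf{SupDetect}$ is only guaranteed to accept with probability at most $1$.

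Next I would build the witness. If $w \le 1-\Delta$ then the claimed bound is non-positive and any quasirigid state (e.g.\ a computational basis state) works, so assume $w > 1-\Delta$; then $S \ne \emptyset$. For each $j \in S$ choose $x_j^\star \in \Sigma^q$ with $|a_{j,x_j^\star}|^2 = \max_{x} |a_{j,x}|^2$, extend $j \mapsto x_j^\star$ arbitrarily to a function $\sigma:[R]\to\Sigma^q$, and set $\ket{\tilde\phi} := \sum_{j \in S} a_{j,x_j^\star} \ket{j}\ket{\sigma(j)}$ and $\ket{\phi} := \ket{\tilde\phi}/\|\tilde\phi\|$. By construction $\ket{\phi}$ is quasirigid, being of the form $\sum_{j \in [R]} b_j \ket{j}\ket{\sigma(j)}$ with $b_j = a_{j,x_j^\star}/\|\tilde\phi\|$ on $S$ and $b_j = 0$ off $S$. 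Since the basis elements appearing in $\ket{\phi}$ form a subset of those in $\ket{\psi}$, the inner product collapses to $\braket{\psi}{\phi} = \|\tilde\phi\|^{-1} \sum_{j \in S} |a_{j,x_j^\star}|^2 = \|\tilde\phi\|$, so $|\braket{\psi}{\phi}|^2 = \|\tilde\phi\|^2 = \sum_{j \in S} |a_{j,x_j^\star}|^2$. Applying the defining inequality of $S$ termwise and then the lower bound on $\sum_{j \in S} p_j$ from the previous paragraph yields $|\braket{\psi}{\phi}|^2 > (1-\epsilon)\sum_{j \in S} p_j \ge (1-\epsilon)\tfrac{w-(1-\Delta)}{\Delta}$, as desired.

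I do not expect a genuine obstacle here: the argument is a rearrangement of \Cref{fact:quasicheck_info} together with one overlap computation. The only points needing care are the degenerate cases ($p_j = 0$, $S = \emptyset$, $w \le 1-\Delta$) and the observation that the ``off-support'' amplitudes of $\ket{\psi}$ — those on pairs $\ket{j}\ket{x}$ with $x \ne x_j^\star$, or with $j \notin S$ — drop out of $\braket{\psi}{\phi}$, which is precisely what allows the mass $\ket{\psi}$ places on the chosen basis elements to lower-bound $|\braket{\psi}{\phi}|^2$.
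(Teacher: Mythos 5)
Your proposal is correct and follows essentially the same route as the paper: pick the heaviest value-register element $f(j)$ for each constraint index, form the quasirigid state from those amplitudes, observe that the overlap with $\ket{\psi}$ equals the normalization constant, and lower-bound that constant via the rearranged inequality from \Cref{fact:quasicheck_info}. The only (cosmetic) difference is that you restrict the support of $\ket{\phi}$ to the set $S$ of indices with $c_{j,\epsilon}=1$, whereas the paper sums over all of $[R]$ and only restricts to $S$ when bounding $\gamma$ from below; both yield the claimed bound, and your explicit handling of the degenerate cases is a harmless addition.
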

\begin{proof}
    Choose a map $f: [R] \to \Sigma^q$ so that for every clause $j \in [R]$, $f(j)$ is an element of $\Sigma^q$ that has the largest weight; i.e. $|a_{j,f(j)}|^2 = \max_{x \in \Sigma^q} |a_{j,x}|^2$. Consider the \emph{quasirigid} state $\ket{\phi} = \frac{1}{\sqrt{\gamma}} \sum_{j \in [R]} a_{j,f(j)} \ket{j}\ket{f(j)}$, where $\gamma = \sum_{j \in [R]}|a_{j,f(j)}|^2$ is chosen so that $\braket{\phi}{\phi} = 1$. Then $|\braket{\psi}{\phi}|^2 = \left| \frac{1}{\sqrt{\gamma}} \sum_{j \in [R]} |a_{j, f(j)}|^2\right|^2 = \gamma$.
    Finally, notice that
    \begin{align*}
        \gamma \ge \sum_{j \in [R]; c_{j,\epsilon} = 1} |a_{j,f(j)}|^2 \ge \sum_{j \in [R]; c_{j,\epsilon} = 1} (1-\epsilon) \sum_{x \in \Sigma^q} |a_{j,x}|^2 \ge (1-\epsilon) \cdot \frac{w - (1-\Delta)}{\Delta}\,.
    \tag*{\qedhere} 
    \end{align*}
\end{proof}
Since quasirigid states only use a $\frac{1}{\kappa}$ fraction of basis elements, they will not test well on $\density$. In fact, no quantum state can do well on both $\density$ and $\quasichecked$:
\begin{lemma}
\label{lemma:quadratic}
    Suppose $A(\density, \ket{\psi}) = w_{\mathsf{D}} \ge \frac{1}{\kappa}$ and $A(\quasichecked, \ket{\psi}) = w_{\mathsf{Q}}$. Then $(w_{\mathsf{D}} -  \frac{1}{\kappa})^2 +  (1-\epsilon) \frac{w_{\mathsf{Q}}-(1-\Delta)}{\Delta} \le  1$.
\end{lemma}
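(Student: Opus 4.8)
The plan is to run a triangle-inequality argument in the Fubini--Study (angle) metric, using three ingredients. First, \Cref{fact:density_info} gives $|\braket{+}{\psi}|^2 = w_{\mathsf{D}}$. Second, \Cref{lemma:validity_means_close_to_quasirigid} produces a \emph{quasirigid} state $\ket{\phi}$ with $|\braket{\psi}{\phi}|^2 \ge t := (1-\epsilon)\frac{w_{\mathsf{Q}}-(1-\Delta)}{\Delta}$. Third, a quasirigid state $\ket{\phi} = \sum_{j\in[R]} b_j\ket{j}\ket{\sigma(j)}$ is supported on only $R$ of the $R\kappa$ computational basis states, so $|\braket{+}{\phi}|^2 = \frac{1}{R\kappa}\bigl|\sum_j b_j\bigr|^2 \le \frac{1}{\kappa}$ by Cauchy--Schwarz. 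Thus $\ket{\psi}$ is close to both $\ket{+}$ and $\ket{\phi}$ (by amounts $w_{\mathsf{D}}$ and $t$), while $\ket{+}$ and $\ket{\phi}$ are nearly orthogonal; I want to convert this into the claimed quadratic trade-off.

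To make it quantitative, I would decompose $\ket{\psi}$ in an orthonormal basis containing $\ket{+}$: write $\ket{\psi} = \alpha\ket{+} + \beta\ket{+^\perp}$ with $\ket{+^\perp}\perp\ket{+}$, $|\alpha|^2 = w_{\mathsf{D}}$, $|\beta|^2 = 1-w_{\mathsf{D}}$. Then $|\braket{\phi}{\psi}| \le \sqrt{w_{\mathsf{D}}}\,|\braket{\phi}{+}| + \sqrt{1-w_{\mathsf{D}}}\,|\braket{\phi}{+^\perp}|$, where the two overlaps obey $|\braket{\phi}{+}| \le \frac{1}{\sqrt{\kappa}}$ (ingredient three) and $|\braket{\phi}{+}|^2 + |\braket{\phi}{+^\perp}|^2 \le 1$ (orthonormality of $\ket{+},\ket{+^\perp}$ and $\braket{\phi}{\phi}=1$). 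Now set $\sqrt{w_{\mathsf{D}}} = \cos a$ and $\frac{1}{\sqrt{\kappa}} = \cos b$ with $a,b\in[0,\pi/2]$; the hypothesis $w_{\mathsf{D}}\ge\frac{1}{\kappa}$ is precisely $a\le b$. Writing $|\braket{\phi}{+}| = \cos\theta$ with $\theta\in[b,\pi/2]$ and bounding $|\braket{\phi}{+^\perp}|\le\sin\theta$, the right-hand side is at most $\cos a\cos\theta + \sin a\sin\theta = \cos(\theta-a) \le \cos(b-a)$, using $0\le b-a\le\theta-a\le\pi/2$ and that $\cos$ is decreasing there. Hence $t \le |\braket{\psi}{\phi}|^2 \le \cos^2(b-a)$.

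It then remains to verify $(w_{\mathsf{D}}-\frac{1}{\kappa})^2 \le \sin^2(b-a)$. From $\cos^2 x = \frac{1+\cos 2x}{2}$ and the product-to-sum identity, $w_{\mathsf{D}} - \frac{1}{\kappa} = \cos^2 a - \cos^2 b = \sin(a+b)\sin(b-a)$, so $(w_{\mathsf{D}}-\frac{1}{\kappa})^2 = \sin^2(a+b)\sin^2(b-a) \le \sin^2(b-a)$. Chaining everything, $(w_{\mathsf{D}}-\frac{1}{\kappa})^2 \le \sin^2(b-a) = 1-\cos^2(b-a) \le 1 - t$, which is exactly the statement after substituting $t = (1-\epsilon)\frac{w_{\mathsf{Q}}-(1-\Delta)}{\Delta}$; note this chain also covers the degenerate case $t<0$.

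The one delicate point is the overlap bound in the second paragraph: replacing $|\braket{\phi}{+^\perp}|$ by the trivial bound $1$ loses too much and produces an inequality that actually fails near $w_{\mathsf{D}} = \frac{1}{\kappa}$, so one must retain the normalization constraint $|\braket{\phi}{+}|^2 + |\braket{\phi}{+^\perp}|^2 \le 1$ and optimize carefully. The trigonometric substitution $\sqrt{w_{\mathsf{D}}}=\cos a$, $\frac{1}{\sqrt{\kappa}}=\cos b$ is what makes both this optimization and the algebraic identity $w_{\mathsf{D}}-\frac{1}{\kappa} = \sin(a+b)\sin(b-a)$ transparent; everything else is routine.
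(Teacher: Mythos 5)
Your proof is correct, and it reaches the paper's conclusion by a genuinely different final step. The skeleton is shared: both arguments invoke \Cref{lemma:validity_means_close_to_quasirigid} to produce a quasirigid $\ket{\phi}$ with $|\braket{\psi}{\phi}|^2 \ge t$, use \Cref{fact:density_info} for $|\braket{+}{\psi}|^2 = w_{\mathsf{D}}$, and use Cauchy--Schwarz on the $R$-term support of $\ket{\phi}$ to get $|\braket{+}{\phi}|^2 \le \frac{1}{\kappa}$. The paper then finishes in one line by applying \Cref{fact} with $\Pi = \ketbra{+}{+}$, obtaining $w_{\mathsf{D}} - \tfrac{1}{\kappa} \le |\braket{+}{\psi}|^2 - |\braket{+}{\phi}|^2 \le \sqrt{1-t}$ and squaring. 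You instead re-derive the needed inequality from scratch: decomposing $\ket{\psi} = \alpha\ket{+} + \beta\ket{+^\perp}$ and optimizing over the angle $\theta$ is, in effect, a self-contained proof of the rank-one case of \Cref{fact}, sharpened by the constraint $|\braket{\phi}{+}| \le \cos b$. Your intermediate identities check out ($\cos^2 a - \cos^2 b = \sin(a+b)\sin(b-a)$, the monotonicity of $\cos$ on $[0,\pi/2]$ with $a \le b \le \theta$), and your chain in fact yields the marginally stronger bound $(w_{\mathsf{D}}-\tfrac{1}{\kappa})^2 \le \sin^2(a+b)\,(1-t)$ before you discard the factor $\sin^2(a+b)\le 1$; you also correctly note that the chain degrades gracefully when $t \le 0$. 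The trade-off: the paper's route is shorter and reuses a fact it needs anyway (e.g.\ in \Cref{claim:rigidity} and in case 4 of the main proof), while yours avoids the black box and makes the Fubini--Study geometry explicit at the cost of some trigonometric bookkeeping. Your closing caveat is also apt --- bounding $|\braket{\phi}{+^\perp}|$ trivially by $1$ would indeed break the argument near $w_{\mathsf{D}} = \tfrac{1}{\kappa}$, so retaining the normalization constraint is essential.
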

\begin{proof}
By \Cref{lemma:validity_means_close_to_quasirigid}, there exists a quasirigid state $\ket{\phi}$ such that  $|\braket{\psi}{\phi}|^2 \ge (1-\epsilon)\frac{w_{\mathsf{Q}}-(1-\Delta)}{\Delta}$. 
Let $\ket{\phi} = \sum_{j \in [R]} b_j \ket{j}\ket{\sigma(j)}$ for some function $\sigma$ and amplitudes $\{b_j\}_{j \in [R]}$.
Using \Cref{fact}, we see that 
$\left| |\braket{+}{\psi}|^2 - |\braket{+}{\phi}|^2 \right|$ is at most $\sqrt{1 - (1-\epsilon)\frac{w_{\mathsf{Q}}-(1-\Delta)}{\Delta}}$. Note that $|\braket{+}{\psi}|^2 = w_{\mathsf{D}}$, and by Cauchy-Schwarz, $|\braket{+}{\phi}|^2 = \frac{1}{R \cdot \kappa} \left| \sum_{j \in [R]} b_j \right|^2 \le \frac{1}{\kappa} \sum_{j \in [R]} |b_j|^2 = \frac{1}{\kappa}$. Since $w_{\mathsf{D}} \ge \frac{1}{\kappa}$, we have $w_{\mathsf{D}} - \frac{1}{\kappa} \le |\braket{+}{\psi}|^2 - |\braket{+}{\phi}|^2$, which is at most $\sqrt{1 - (1-\epsilon) \frac{w_{\mathsf{Q}}-(1-\Delta)}{\Delta}}$. The statement follows after rearranging terms.
\end{proof}
We show that the \emph{best} quantum states for this pair of tests are close to \emph{rigid} states. Intuitively, the quasirigid states that test well on $\density$ have nearly uniform amplitudes, and so are close to rigid.
\begin{claim}
\label{claim:rigidity}
    Suppose $A(\quasichecked, \ket{\psi}) \ge 1- \Delta \cdot d_\mathsf{Q}$. Then there exists a rigid state $\ket{\chi}$ such that $|\braket{\chi}{\psi}|^2 \ge  \kappa \cdot A(\density, \ket{\psi}) -(\kappa + 1) \sqrt{\epsilon + d_\mathsf{Q}}$. 
\end{claim}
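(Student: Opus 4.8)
The plan is to pull a quasirigid state out of $\ket{\psi}$ via \Cref{lemma:validity_means_close_to_quasirigid}, flatten its amplitudes to obtain a rigid state, and then charge the cost of both passages (flattening, and going from $\ket{\phi}$ back to $\ket{\psi}$) against $\sqrt{\epsilon + d_{\mathsf{Q}}}$ using \Cref{fact}. Set $d := \epsilon + d_{\mathsf{Q}}$. We may assume $d_{\mathsf{Q}} < 1$, since otherwise $(\kappa+1)\sqrt{\epsilon + d_{\mathsf{Q}}} \ge \kappa + 1 > \kappa \ge \kappa\,A(\density,\ket{\psi})$ and the claimed inequality is vacuous.

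First I would apply \Cref{lemma:validity_means_close_to_quasirigid} with $w = A(\quasichecked, \ket{\psi}) \ge 1 - \Delta\cdot d_{\mathsf{Q}}$, which produces a quasirigid state $\ket{\phi} = \sum_{j \in [R]} b_j \ket{j}\ket{\sigma(j)}$ with
\[
|\braket{\psi}{\phi}|^2 \;\ge\; (1-\epsilon)\frac{w-(1-\Delta)}{\Delta} \;\ge\; (1-\epsilon)(1-d_{\mathsf{Q}}) \;\ge\; 1-d ,
\]
where the middle inequality is exactly the hypothesis $w \ge 1 - \Delta d_{\mathsf{Q}}$ rearranged. Next, define the rigid state $\ket{\chi} := \frac{1}{\sqrt{R}}\sum_{j \in [R]} \ket{j}\ket{\sigma(j)}$ using the same function $\sigma$; by \Cref{defn:rigid} this is rigid (the phases of the $b_j$ are simply discarded). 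The one computation to do by hand is the \emph{exact} identity
\[
|\braket{\chi}{\phi}|^2 \;=\; \frac{1}{R}\Bigl|\sum_{j \in [R]} b_j\Bigr|^2 \;=\; \kappa\,|\braket{+}{\phi}|^2 ,
\]
which follows by expanding $\ket{+}$, $\ket{\phi}$, and $\ket{\chi}$ in the computational basis (note $\ket{+} = \frac{1}{\sqrt{R\kappa}}\sum_{j,x}\ket{j,x}$). Then two invocations of \Cref{fact} with $|\braket{\psi}{\phi}|^2 \ge 1-d$ finish the argument: taking $\Pi = \ketbra{+}{+}$ and using \Cref{fact:density_info} gives $|\braket{+}{\phi}|^2 \ge A(\density,\ket{\psi}) - \sqrt{d}$, and taking $\Pi = \ketbra{\chi}{\chi}$ gives $|\braket{\chi}{\psi}|^2 \ge |\braket{\chi}{\phi}|^2 - \sqrt{d}$. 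Chaining, $|\braket{\chi}{\psi}|^2 \ge \kappa\,|\braket{+}{\phi}|^2 - \sqrt{d} \ge \kappa\,A(\density,\ket{\psi}) - (\kappa+1)\sqrt{d}$, which is the claim.

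I do not expect a genuine obstacle here: the only points needing care are verifying that $\ket{\chi}$ literally satisfies \Cref{defn:rigid} and that $|\braket{\chi}{\phi}|^2 = \kappa\,|\braket{+}{\phi}|^2$ is an equality rather than an estimate; the rest is bookkeeping with \Cref{fact}. In particular, \Cref{lemma:quadratic} is \emph{not} used in this proof — it plays its role elsewhere, ruling out states that score well on $\density$ and $\quasicheck$ simultaneously — and no case split on whether $A(\density,\ket{\psi}) \ge \frac{1}{\kappa}$ is needed, since the lower bounds above remain (trivially) valid even when an intermediate quantity is negative.
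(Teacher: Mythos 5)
Your proof is correct and follows essentially the same route as the paper's: extract a quasirigid $\ket{\phi}$ via \Cref{lemma:validity_means_close_to_quasirigid}, use the exact identity $|\braket{\chi}{\phi}|^2 = \kappa\,|\braket{+}{\phi}|^2$ for the flattened rigid state $\ket{\chi}$, and apply \Cref{fact} twice (once with $\ket{+}$, once with $\ket{\chi}$) to transfer the bounds back to $\ket{\psi}$. The extra remarks about the vacuous case and about not needing \Cref{lemma:quadratic} are accurate but inessential.
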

\begin{proof}
    By \Cref{lemma:validity_means_close_to_quasirigid}, there exists a quasirigid state $\ket{\phi}$ such that $|\braket{\psi}{\phi}|^2 \ge (1-\epsilon)(1 - d_\mathsf{Q})$. By \Cref{fact}, we have
    $\left| |\braket{\mu}{\psi}|^2 - |\braket{\mu}{\phi}|^2 \right| \le \sqrt{1 - (1-\epsilon)(1-d_\mathsf{Q})} \le \sqrt{\epsilon + d_\mathsf{Q}}$ for any unit vector $\ket{\mu}$. We use this in two places. First, since  $|\braket{+}{\psi}|^2 = A(\density, \ket{\psi})$, applying \Cref{fact} with $\ket{\mu} = \ket{+}$ implies $|\braket{+}{\phi}|^2 \ge A(\density, \ket{\psi}) -\sqrt{\epsilon + d_\mathsf{Q}}$. Now let $\ket{\chi}$ be the \emph{rigid} state with the same computational basis elements as $\ket{\phi}$. Then
    \begin{align*}
        |\braket{\chi}{\phi}|^2 = \kappa |\braket{+}{\phi}|^2\ge 
        \kappa \cdot A(\density, \ket{\psi}) - \kappa  \sqrt{\epsilon + d_\mathsf{Q}} \,.
    \end{align*}
    Applying \Cref{fact} using $\ket{\mu} = \ket{\chi}$ implies  $|\braket{\chi}{\psi}|^2  \ge  \kappa \cdot A(\density, \ket{\psi}) - \kappa \sqrt{\epsilon + d_\mathsf{Q}} - \sqrt{\epsilon + d_\mathsf{Q}}$.
\end{proof}

We are now ready to prove \Cref{thm:no_superposition_detectors}.
\begin{proof}[Proof of \Cref{thm:no_superposition_detectors}]
Suppose for any constant $k$ and $\epsilon > 0$, there exists some $\Delta$ at least inverse polynomial in input size such that $\mathsf{SupDetect}_{k,\epsilon,\Delta}$ can be efficiently implemented.
We construct the following $\QMA$ verifier that decides every $\GapCSP$ instance guaranteed by \Cref{thm:nexphardsystem}, choosing $p_1, p_2, p_3, \epsilon$ later on:
\begin{itemize}
    \item With probability $p_1$, run $\density$.
    \item With probability $p_2$, run $\quasichecked$.
    \item With probability $p_3$, run the protocol $\mathsf{ConstraintCheck}$ guaranteed by \Cref{lemma:rigid_means_qma_nexp}.
\end{itemize}

\Cref{lemma:rigid_means_qma_nexp} implies the existence of absolute constants $\cyes, \xi,\kappa$. We choose $\epsilon$ and \emph{distance thresholds} $\nul, \nuh$ to be small positive constants that satisfy the following conditions:
\begin{align*}
    \epsilon < \nuh^2 \le \nul^2 \le 1\,, & & \frac{\nuh}{\nul} \le \frac{\xi}{6(1-\cyes)}\,, & & \left( \kappa \nul + (\kappa + 1) \sqrt{\epsilon + \nul}\right)^{1/2} \le \frac{\xi}{2}\,.\end{align*}
For example, this can be done by first making all constants equal and small enough to satisfy the right-most inequality, then reducing $\nuh$ to satisfy the middle inequality, then reducing $\epsilon$ to satisfy the first inequality.

Now we choose the probabilities $p_1, p_2, p_3$. Let
\begin{align*}
    p_1= \frac{1}{Z}\,,& & p_2= \frac{(\nul + \nuh) (1 - \epsilon)}{\Delta (\nuh^2 - \epsilon) Z}\,, & & p_3 = \frac{\nul}{2(1 - \cyes)Z}\,,
\end{align*}
where $Z := 1 + \frac{(\nul + \nuh) (1 - \epsilon)}{\Delta (\nuh^2 - \epsilon)}  + \frac{\nul}{2(1 - \cyes)}$, so that the probabilities sum to $1$.

Given a quantum witness $\ket{\psi}$, the verifier accepts with probability 
\begin{align*}
    p_1 \cdot A(\density, \ket{\psi}) + p_2 \cdot A(\quasichecked, \ket{\psi}) + p_3 \cdot A(\mathsf{ConstraintCheck}, \ket{\psi})\,.
\end{align*}
In completeness, we consider the quantum witness guaranteed by \Cref{lemma:rigid_means_qma_nexp}. Since this state is rigid, by \Cref{fact:density_info} and \Cref{fact:quasicheck_info}, the verifier accepts with probability at least $\pyes := p_1 \cdot \frac{1}{\kappa} + p_2 \cdot 1 + p_3 \cdot \cyes$.

We split soundness into four cases, depending on the acceptance probabilities of $\density$ and $\quasicheck$. We give a graphical picture in \Cref{fig:superposition_qma_nexp}.
\begin{figure}[t]
\centering
\includegraphics[width=0.5\textwidth]{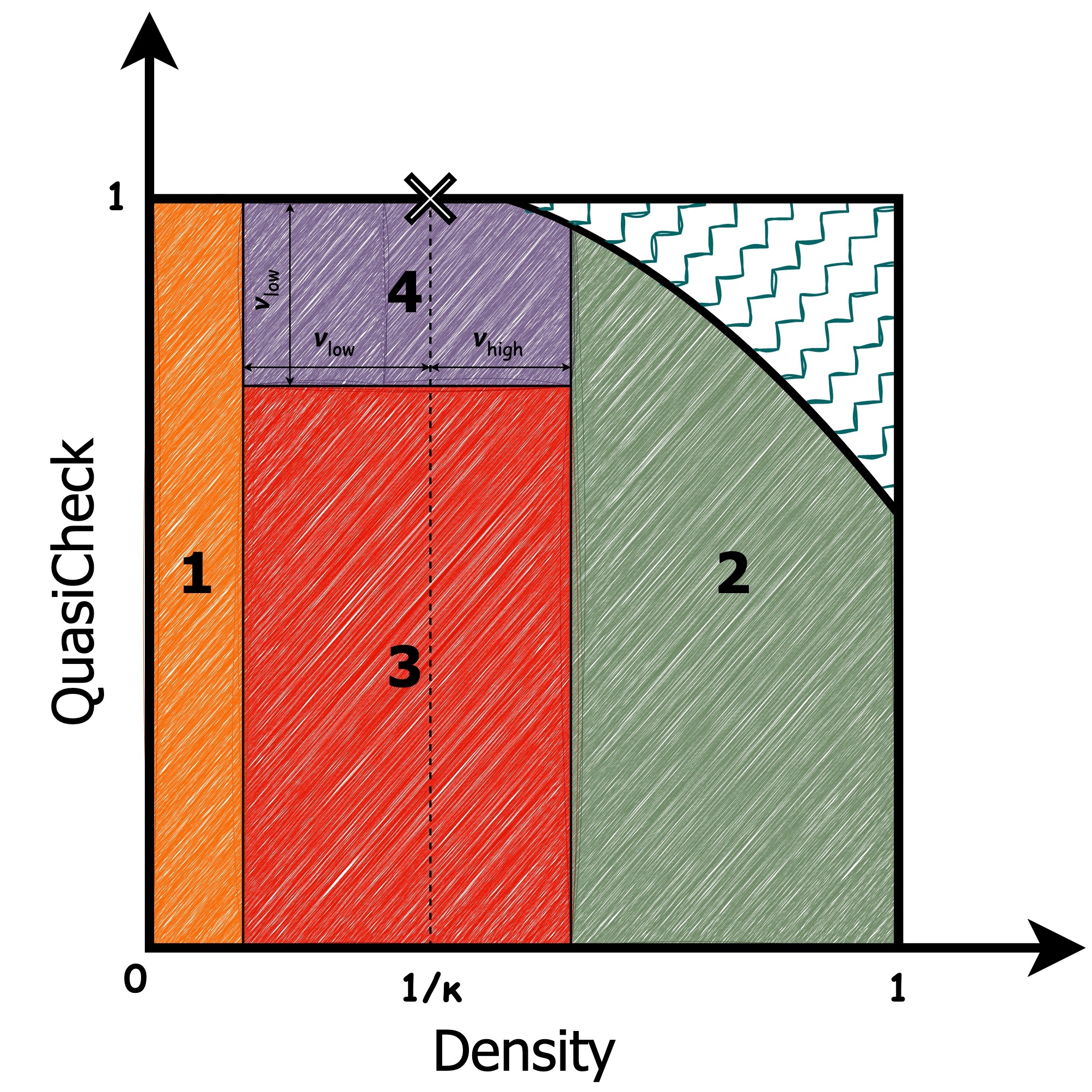}
\caption{Plot of possible acceptance probabilities for the tests $\density$ and $\quasichecked$ when $\Delta = 1$. The numbers $1$ to $4$ correspond to the four cases in soundness in the proof of \Cref{thm:no_superposition_detectors}. The upper right corner is forbidden by \Cref{lemma:quadratic}. Rigid states live at the black ``$\times$'' icon by \Cref{fact:density_info} and \Cref{fact:quasicheck_info}.}
\label{fig:superposition_qma_nexp}
\end{figure}
\begin{enumerate}
    \item In the first case, the witness doesn't test well enough on $\density$. If $A(\density, \ket{\psi}) = \frac{1}{\kappa} - d$ for $d \ge \nul$, then the verifier accepts with probability at most
    \begin{align*}
        \pno &\le p_1 \cdot (\frac{1}{\kappa} - d) + p_2 \cdot 1 + p_3 
        \\
        &\le \pyes - p_1 \cdot \nul + p_3 \cdot (1 - \cyes) 
        \\
        &= \pyes - \frac{\nul}{2 Z}\,.
    \end{align*}
    \item In the second case, the witness tests too well on $\density$, and so cannot test well enough on $\quasicheck$. If $A(\density, \ket{\psi}) =  \frac{1}{\kappa} + d$ for $d \ge \nuh$, then by \Cref{lemma:quadratic}, $A(\quasichecked, \ket{\psi}) \le 1-\Delta + \Delta \cdot \frac{1-d^2}{1 - \epsilon}$:
    \begin{align*}
        \pno &\le p_1 \cdot \left( \frac{1}{\kappa} + d\right) + p_2 \cdot \left(1-\Delta + \Delta \cdot \frac{1-d^2}{1 - \epsilon}\right) + p_3
        \\
        &\le \pyes + p_1 \cdot d - p_2 \Delta \cdot \left(\frac{d^2 - \epsilon}{1 - \epsilon}\right) + p_3 \cdot (1 - \cyes) \,.
    \end{align*}
The right-most expression depends on $d$ through the term 
$$
p_1 \cdot d - p_2\Delta \cdot \left(\frac{d^2 - \epsilon}{1 - \epsilon}\right) = p_1 \left( d - \frac{(\nul + \nuh) (d^2 - \epsilon)}{\nuh^2 - \epsilon} \right)\,.
$$
This term is decreasing with $d$ when $d \ge \nuh$; one can verify this by taking a partial derivative with respect to $d$. So the expression is maximized at $d = \nuh$, and so
\begin{align*}
     \pno &\le  \pyes + p_1 \left( \nuh - \frac{(\nul + \nuh)(\nuh^2 - \epsilon)}{\nuh^2 - \epsilon} \right)+ p_3 \cdot (1 - \cyes) 
     \\
     &=  \pyes - \frac{\nul}{2 Z}\,.
\end{align*}
\item In the third case, the witness doesn't test well enough on $\quasicheck$, and the acceptance probability of $\density$ is not too high. If $A(\density, \ket{\psi}) = \frac{1}{\kappa} + d$ for $-\nul \le d \le \nuh$ and $A(\quasichecked, \ket{\psi}) \le 1-\Delta \cdot \nul$, then
\begin{align*}
    \pno &\le p_1 \cdot \left( \frac{1}{\kappa} + d \right) + p_2 \cdot \left(1-\Delta \cdot \nul\right) + p_3 
    \\
    &\le   \pyes + p_1 \cdot d - p_2 \Delta \cdot \nul +  p_3 \cdot (1 - \cyes)\,.
\end{align*}
By our choice of constants, $d \le \nuh \le \nul$. Moreover, one can verify that $p_2 \Delta \ge 2 \cdot p_1$ whenever $\nuh \le 1$. So $p_1 \cdot d - p_2 \Delta \cdot \nul \le - p_1\cdot  \nul$, and $\pno \le \pyes - p_1 \cdot \nul +  p_3 \cdot (1 - \cyes) =   \pyes - \frac{\nul}{2 Z}$.

\item In the fourth case, the witness tests well on $\quasicheck$, and the acceptance probability of $\density$ is not too low. So the witness must be close to \emph{rigid}, and do worse on $\mathsf{ConstraintCheck}$. If $A(\density, \ket{\psi}) = \frac{1}{\kappa} + d$ for $-\nul \le d \le \nuh$ and $A(\quasichecked, \ket{\psi}) \ge 1-\Delta \cdot\nul$, then by \Cref{claim:rigidity}, there is a rigid state $\ket{\chi}$ such that $|\braket{\chi}{\psi}|^2 \ge 1 + \kappa d - (\kappa + 1) \sqrt{\epsilon + \nul} \ge 1 - \kappa \nul - (\kappa + 1) \sqrt{\epsilon + \nul}$. We use \Cref{fact} with $\Pi$ equal to the accepting projector of $\mathsf{ConstraintCheck}$; then the difference $|A(\mathsf{ConstraintCheck}, \ket{\psi}) - A(\mathsf{ConstraintCheck}, \ket{\chi})| \le \left(\kappa \nul + (\kappa + 1) \sqrt{\epsilon + \nul}\right)^{1/2} \le \frac{\xi}{2}$. Putting this together,
\begin{align*}
     \pno &\le p_1 \cdot \left( \frac{1}{\kappa} + d \right) + p_2 \cdot 1 + p_3 \cdot\left( \cyes - \xi + \frac{\xi}{2}\right)
     \\
     &\le \pyes + p_1 \cdot \nuh - p_3 \cdot \frac{\xi}{2} 
     \\
     &= \pyes + \frac{1}{Z} \left( \nuh - \frac{\nul \cdot \xi}{4(1 - \cyes)} \right)
     \\
     &\le \pyes - \frac{\nuh}{2Z} \,.
\end{align*}
\end{enumerate}
So, this protocol succeeds with completeness $\pyes$ and soundness at most $\pyes - \frac{\nuh}{2Z}$.
\end{proof}
\section{Implications}

\subsection{$\QMA$ with non-collapsing measurements equals $\NEXP$}
\label{sub:qma_cqp_equals_nexp}
The ability to make non-collapsing measurements implies efficient superposition detection. 
\begin{claim}
\label{claim:noncollapse_to_superposition_detector}
    Consider any positive integer $k$ and constant $0 < \epsilon \le 1 - \frac{1}{2^k}$. Then a quantum computer with the ability to make \emph{just one} non-collapsing measurement can efficiently implement $\mathsf{SupDetect}_{k,\epsilon,2(x-x^2)}$, where $x = \min(\epsilon, \frac{1}{2^k})$.
\end{claim}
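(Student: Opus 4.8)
The plan is to have the detector, on input $\ket{\psi}$ on $k$ qubits, first spend its single non-collapsing measurement measuring $\ket{\psi}$ in the computational basis and record the outcome $a \in \{0,1\}^k$; since a non-collapsing measurement leaves the state intact, it then performs an ordinary (collapsing) computational-basis measurement and records the outcome $b \in \{0,1\}^k$; and it outputs $1$ iff $a = b$. This is clearly efficient and uses exactly one non-collapsing measurement. Completeness is immediate: if $\ket{\psi} = \ket{e}$ for a computational basis state $\ket{e}$, then both measurements return $e$ deterministically, so the output is $1$ with probability $1$. (Note the order matters: the collapsing measurement must come second, or else $a=b$ trivially always.)

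For soundness, set $p_e := |\braket{e}{\psi}|^2$ for $e \in \{0,1\}^k$. Because the non-collapsing measurement does not perturb $\ket{\psi}$, the outcomes $a$ and $b$ are independent, each distributed as $(p_e)_e$, so $\Pr[\text{output } 1] = \Pr[a=b] = \sum_e p_e^2$. It therefore suffices to show that if $p_e \le 1-\epsilon$ for every $e$, then $\sum_e p_e^2 \le 1 - 2(x - x^2)$ with $x = \min(\epsilon, \tfrac{1}{2^k})$. Writing $M := \max_e p_e$ and retaining only the cross terms that involve a fixed maximizer, $1 - \sum_e p_e^2 = \sum_{e \ne e'} p_e p_{e'} \ge 2M(1-M)$. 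By averaging $M \ge \tfrac{1}{2^k}$, and by hypothesis $M \le 1-\epsilon$; this interval for $M$ is nonempty precisely because $\epsilon \le 1 - \tfrac{1}{2^k}$.

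It remains to lower-bound $M(1-M)$. The function $f(t) = t(1-t)$ is concave and satisfies $f(t) = f(1-t)$, so on the interval $[\tfrac{1}{2^k}, 1-\epsilon]$ it is at least $\min(f(\tfrac{1}{2^k}), f(1-\epsilon)) = \min(f(\tfrac{1}{2^k}), f(\epsilon))$. Since $f$ is increasing on $[0,\tfrac12]$ with $\tfrac{1}{2^k} \le \tfrac12$, a short case split finishes it: if $\epsilon \le \tfrac{1}{2^k}$ then $x = \epsilon$ and $f(\epsilon) \le f(\tfrac{1}{2^k})$, so the minimum equals $f(x)$; if $\epsilon > \tfrac{1}{2^k}$ then $x = \tfrac{1}{2^k}$, and since $\tfrac{1}{2^k} \le \min(\epsilon, 1-\epsilon)$ (using $\epsilon \le 1 - \tfrac{1}{2^k}$) while $f(\epsilon) = f(\min(\epsilon,1-\epsilon))$, the minimum again equals $f(\tfrac{1}{2^k}) = f(x)$. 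Hence $1 - \sum_e p_e^2 \ge 2f(x) = 2(x - x^2)$, i.e. $\sum_e p_e^2 \le 1 - 2(x-x^2)$; and $2(x-x^2) \in (0, \tfrac12]$ since $0 < x \le \tfrac{1}{2^k} < 1$, so this acceptance probability is strictly below $1$, matching the definition of $\mathsf{SupDetect}_{k,\epsilon,2(x-x^2)}$. I do not expect a genuine obstacle; the only delicate point is squeezing out the sharp parameter $\Delta = 2(x-x^2)$ rather than the crude bound $\epsilon$ that would follow from $\sum_e p_e^2 \le \max_e p_e$, which is exactly what the concavity-and-symmetry argument on the interval $[\tfrac{1}{2^k}, 1-\epsilon]$ provides.
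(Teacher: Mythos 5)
Your proof is correct and takes essentially the same route as the paper: one non-collapsing computational-basis measurement followed by a collapsing one, accept iff the outcomes agree, and bound the collision probability by $\sum_e p_e^2 \le M^2 + (1-M)^2 = 1 - 2M(1-M)$ with $M = \max_e p_e \in [\tfrac{1}{2^k}, 1-\epsilon]$. The only difference is cosmetic: you spell out the concavity/symmetry endpoint analysis showing $\min\bigl(f(\tfrac{1}{2^k}), f(1-\epsilon)\bigr) = f(x)$, which the paper leaves implicit after noting the expression is extremized at the endpoints of the interval for $t$.
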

\begin{proof}
    Suppose the quantum computer is given a $k$-qubit state $\ket{\psi}$, and choose any $\epsilon > 0$. The computer first makes one non-collapsing measurement of $\ket{\psi}$. The computer then makes a (collapsing) measurement of $\ket{\psi}$ in the computational basis, and outputs $1$ if and only if the measurements agree.

    If $\ket{\psi}$ is truly a computational basis state, then the computer always outputs $1$. Now suppose $|\braket{e}{\psi}|^2 \le 1 - \epsilon$ for all $k$-qubit computational basis states $\ket{e} \in \mathcal{B}_k$. Let $\ket{e_{*}}$ be a state in $\mathcal{B}_k$ with the largest squared overlap with $\ket{\psi}$, and let $t = |\braket{e_{*}}{\psi}|^2$. Then the probability of outputting $1$ can be upper-bounded as
    \begin{align*}
        \sum_{\ket{e} \in \mathcal{B}_k} \left( |\braket{e}{\psi}|^2 \right)^2
        =
        |\braket{e_*}{\psi}|^4 + \sum_{\ket{e} \in \mathcal{B}_k; \ket{e} \ne \ket{e_*}}|\braket{e}{\psi}|^4
        \le
        t^2 + \Bigg( \sum_{\ket{e} \in \mathcal{B}_k; \ket{e} \ne \ket{e_*}}|\braket{e}{\psi}|^2\Bigg)^2 = t^2 + (1-t)^2 \,,
    \end{align*}
    which equals $1 - 2(t-t^2)$.
    This is maximized at the largest and smallest valid values of $t$.
    Observe that $t \ge \frac{1}{2^k}$ since $\ket{e_{*}}$ has the largest squared overlap with $\ket{\psi}$, and $t \le 1 - \epsilon$ by supposition.
\end{proof}
The lower bound of \Cref{cor:qma_cqp_equals_nexp} is then a direct consequence of \Cref{thm:no_superposition_detectors} and \Cref{claim:noncollapse_to_superposition_detector}. 
For the upper bound, note that any efficient quantum circuit that includes collapsing or non-collapsing measurements can be classically simulated in exponential time, so $\QMA$ with non-collapsing measurements is at most $\NEXP$. This resolves \Cref{question}.

\cite{aaronson2014space} defined two complexity classes with non-collapsing measurement: $\CQP$ and $\naCQP$ (non-adaptive $\CQP$).
In $\naCQP$, the non-collapsing measurement results can only be used at the \emph{end} of the quantum computation (i.e. in classical post-processing). 
We remark that both \Cref{claim:noncollapse_to_superposition_detector} and the proof of \Cref{thm:no_superposition_detectors} satisfy this non-adaptive condition.

The original framing of \Cref{question} in \cite{scott_blogpost_pdqp} directly considers the power of $\QMA$ with a $\naCQP$ (or $\CQP$) verifier. This has a technical issue which we discuss here. \cite{aaronson2014space} considered a \emph{classical} oracle $\mathcal{O}$ that receives a quantum circuit and outputs non-collapsing measurement results. $\naCQP$ ($\CQP$) is then defined as $\P$ with $1$ ($\poly(n)$) query to $\mathcal{O}$. If we use such a machine as a verifier, it equals $\NEXP$ only if $\mathcal{O}$ can also receive the \emph{quantum} witness. To handle this issue, we consider $\QMA$ where the verifier can make a non-collapsing measurement on an arbitrary quantum state at \emph{any} point in the computation. 
Here, \Cref{claim:noncollapse_to_superposition_detector} uses one non-collapsing measurement, which is immediately followed by one collapsing measurement.

\cite{aaronson_pdqma} instead defines the class $\PDQMA$, where the verifier first runs a quantum circuit (with collapsing measurements), then makes \emph{all} non-collapsing measurements, and finally runs a classical circuit given the non-collapsing measurement results. In this model, \Cref{claim:noncollapse_to_superposition_detector} is not valid as written, since non-collapsing measurements must occur after all collapsing measurements. However, by replacing the final collapsing measurement with a non-collapsing measurement, we recover $\NEXP \subseteq \PDQMA$ using \emph{two} non-collapsing measurements. We remark that this is optimal: $\PDQMA$ with one non-collapsing measurement is equal to $\QMA$, since the verifier's final measurement can always be collapsing. \cite{aaronson_pdqma} uses $O(n \log n)$ non-collapsing measurements to prove $\NEXP \subseteq \PDQMA$.

\subsection{$\QMA^+ = \NEXP$}
\label{sub:qma_plus_equals_nexp} 
If a quantum witness has \emph{non-negative amplitudes}, a partial measurement in the computational basis will preserve this property.
A $\BQP$ machine can implement superposition detection on any such quantum state:
\begin{claim}
\label{claim:qmaplus}
    Consider the algorithm $\mathsf{Verify}^+_k$, which on input a $k$-qubit quantum state $\ket{\psi}$, measures $\ket{\psi}$ in the Fourier basis and outputs $0$ if the observed basis component is $\frac{1}{2^{k/2}}\sum_{i \in [2^k]}\ket{i}$, and $1$ otherwise. Then for every $\epsilon > 0$, there is a superposition detector $\mathsf{SupDetect}_{k, \epsilon, \sqrt{\epsilon/2^k}}$ such that for every state $\ket{\psi}$ with non-negative computational basis amplitudes, $A(\mathsf{Verify}^+_k, \ket{\psi})
    =
    A(\mathsf{SupDetect}_{k, \epsilon, \sqrt{\epsilon/2^k}}, \ket{\psi})
    - \frac{1}{2^k}$.
\end{claim}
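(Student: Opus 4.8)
The plan is to read off the acceptance probability of $\mathsf{Verify}^+_k$ and then convert it into superposition-detector form using only the non-negativity hypothesis. Since $\mathsf{Verify}^+_k$ outputs $0$ exactly when its Fourier-basis measurement returns $\ket{+}:=\frac{1}{2^{k/2}}\sum_{i\in[2^k]}\ket{i}$, we have $A(\mathsf{Verify}^+_k,\ket{\psi})=1-|\braket{+}{\psi}|^2$ for every $\ket{\psi}$. Writing $\ket{\psi}=\sum_i a_i\ket{i}$ with $a_i\ge 0$ and $\sum_i a_i^2=1$, we get $\braket{+}{\psi}=\frac{1}{2^{k/2}}\sum_i a_i$, hence $|\braket{+}{\psi}|^2=\frac{1}{2^k}S^2$ where $S:=\sum_i a_i$. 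The elementary bound $S\ge(\sum_i a_i^2)^{1/2}=1$ (with equality iff $\ket{\psi}$ is a computational basis state) then gives $\frac{1}{2^k}\le|\braket{+}{\psi}|^2\le 1$.

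Next I would take $\mathsf{SupDetect}_{k,\epsilon,\Delta}$ to be the procedure that runs $\mathsf{Verify}^+_k$ and accepts with an additional fixed probability $\frac{1}{2^k}$; then on non-negative states $A(\mathsf{SupDetect}_{k,\epsilon,\Delta},\ket{\psi})=A(\mathsf{Verify}^+_k,\ket{\psi})+\frac{1}{2^k}=1-\frac{S^2-1}{2^k}$, a valid probability (it lies in $[\frac{1}{2^k},1]$ since $1\le S^2\le 2^k$) that is exactly the stated identity. It then remains to verify the two superposition-detector axioms. Completeness is immediate: a basis state has $S=1$ and so is accepted with probability $1$. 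For soundness, suppose $|\braket{e}{\psi}|^2\le 1-\epsilon$ for every basis state $\ket{e}$; by non-negativity this means $a_i\le\sqrt{1-\epsilon}$ for all $i$, so $1=\sum_i a_i^2\le(\max_i a_i)\,S\le\sqrt{1-\epsilon}\,S$, giving $S\ge(1-\epsilon)^{-1/2}$ and hence $S^2-1\ge\frac{\epsilon}{1-\epsilon}\ge\epsilon$; therefore $A(\mathsf{SupDetect}_{k,\epsilon,\Delta},\ket{\psi})\le 1-\frac{\epsilon}{2^k}<1$. To get the sharper, $\sqrt{\epsilon}$-type gap $\Delta$ stated in the claim I would instead lower bound $S$ via the extremal problem of minimizing $\sum_i a_i$ over $\{a:\ a_i\ge 0,\ \sum_i a_i^2=1,\ \max_i a_i\le\sqrt{1-\epsilon}\}$: a short rearrangement/smoothing argument shows the minimizer puts all its weight on two coordinates, one equal to $\sqrt{1-\epsilon}$ and one equal to $\sqrt{\epsilon}$ (valid for $\epsilon\le\frac12$; the range $\epsilon>1-2^{-k}$ is vacuous, since then no such $\ket{\psi}$ exists), so $S\ge\sqrt{1-\epsilon}+\sqrt{\epsilon}$ and $S^2-1\ge 2\sqrt{\epsilon(1-\epsilon)}$, and the precise constant in $\Delta$ is then read off from this ratio.

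Finally, for the efficiency clause I would note that $\mathsf{Verify}^+_k$ is a single Fourier-basis measurement (polynomial time, and of constant size when $k$ is constant), and that a partial computational-basis measurement of a non-negative-amplitude witness again has non-negative amplitudes, so $\mathsf{SupDetect}$ may be applied to the value register of a $\QMA^+$ witness exactly as \Cref{thm:no_superposition_detectors} demands; \Cref{cor:qmaplus_equals_nexp} is then immediate. The step I expect to be the real work is isolating the correct power of $\epsilon$ in $\Delta$: the one-line bound $S\ge(1-\epsilon)^{-1/2}$ only gives a gap $\sim\epsilon/2^k$, and the $\sqrt{\epsilon}$ improvement requires the extremal analysis above. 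One conceptual subtlety worth a remark in the write-up is that no genuine single-copy measurement can accept every basis state with probability exactly $1$ yet reject some state (cf. \Cref{fact:sup_detection_is_impossible}); the additive $\frac{1}{2^k}$ is precisely the bookkeeping that bridges the concrete test $\mathsf{Verify}^+_k$ with the idealized detector interface, and one should check this shift is accounted for consistently wherever $\mathsf{SupDetect}$ enters the protocol.
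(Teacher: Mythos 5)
Your proposal is correct and matches the paper's proof in all essentials: you compute $A(\mathsf{Verify}^+_k,\ket{\psi})=1-\tfrac{1}{2^k}S^2$ with $S=\sum_i a_i$, use non-negativity to get $S^2=1+\sum_{i\neq j}a_ia_j\ge 1$ with equality exactly on computational basis states, and treat the additive $\tfrac{1}{2^k}$ as bookkeeping between the physical test and the (unphysical, cf.\ \Cref{fact:sup_detection_is_impossible}) detector interface, exactly as the paper does. The one genuine divergence is the quantitative soundness step. You first obtain the gap $\epsilon/2^k$ and then propose an extremal/rearrangement analysis (the $\ell_1$-minimizer under the $\ell_\infty$ cap is two-point supported) to reach $S^2-1\ge 2\sqrt{\epsilon(1-\epsilon)}$ for $\epsilon\le\tfrac12$. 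The paper reaches a $\sqrt{\epsilon}$-type gap with no extremal analysis: taking $i^*=\mathrm{argmax}_i\,a_i$, pigeonhole on $\sum_i a_i^2=1$ gives $a_{i^*}\ge 2^{-k/2}$, and then $\sum_{i\neq j}a_ia_j\ge a_{i^*}\sum_{j\neq i^*}a_j\ge 2^{-k/2}\bigl(\sum_{j\neq i^*}a_j^2\bigr)^{1/2}\ge 2^{-k/2}\sqrt{\epsilon}$. Your route is valid (and yields a slightly stronger constant in the relevant regime, with your cruder bound covering $\epsilon>\tfrac12$), but note that neither your $2\sqrt{\epsilon(1-\epsilon)}/2^k$ nor the paper's $\sqrt{\epsilon}/2^{3k/2}$ literally equals the $\sqrt{\epsilon/2^k}$ in the claim statement; this mismatch is harmless, since $k$ and $\epsilon$ are constants and \Cref{thm:no_superposition_detectors} only requires $\Delta=\Omega(1/\poly(n))$, so you need not chase that exact power of $\epsilon$. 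One small caution: read ``accepts with an additional fixed probability $\tfrac{1}{2^k}$'' as defining the abstract detector's acceptance probability rather than as a physical subroutine (ORing with an independent coin of bias $p$ gives $A+p(1-A)$, not $A+p$); the paper likewise never implements the shifted detector and simply plugs $\mathsf{Verify}^+_k$ into the protocol, absorbing the $\tfrac{1}{2^k}$ into the promise gap.
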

\begin{proof}
    Let the input be $\ket{\psi} = \sum_{i \in [2^k]} a_i \ket{i}$. Then $A(\mathsf{Verify}^+_k, \ket{\psi}) = 1 - \frac{1}{2^k} \left| \sum_{i \in [2^k]} a_i\right|^2$. Since all $a_i \ge 0$,
    \begin{align*}
        A(\mathsf{Verify}^+_k, \ket{\psi}) = 1 - \frac{1}{2^k}  \Big(\sum_{i \in [2^k]} a_i\Big)^2 
        = 1 - \frac{1}{2^k}\left( 1 + \sum_{i,j \in [2^k]; i \ne j} a_i a_j\right)
         \le 1 - \frac{1}{2^k}\,,
    \end{align*}
    and this is achieved if $\ket{\psi}$ is a computational basis state. 
    Now, let $i^* = \text{argmax}_{i \in [2^k]} a_i^2$. Since $\sum_{i \in [2^k]} a_i^2 = 1$, $a_{i^*} \ge \frac{1}{2^{k/2}}$. If $a_{i^*}^2 \le 1 - \epsilon$, then
    \begin{align*}
        \sum_{i,j \in [2^k]; i \ne j} a_i a_j 
        \ge  \frac{1}{2^{k/2}} \sum_{j \in [2^k], j \ne i^*} a_j \ge  \frac{1}{2^{k/2}} \sqrt{\sum_{j \in [2^k], j \ne i^*} a_j^2} \ge \frac{\sqrt{\epsilon}}{2^{k/2}}\,.     \tag*{\qedhere} 
    \end{align*}
\end{proof}
\Cref{cor:qmaplus_equals_nexp} is then a direct consequence of \Cref{thm:no_superposition_detectors} and \Cref{claim:qmaplus}. Note that $\mathsf{Verify}^+_k$ acts as a superposition detector, since the additive error $\frac{1}{2^k}$ in \Cref{claim:qmaplus} does not change the promise gap of the protocol in \Cref{thm:no_superposition_detectors}.

In $\QMA^+$, one \emph{only} considers quantum witnesses which admit efficient superposition detection by \Cref{claim:qmaplus}. 
By contrast, \Cref{thm:no_superposition_detectors} implies that this property cannot be tested on \emph{all} quantum witnesses, assuming $\QMA \ne \NEXP$. In fact, it is information-theoretically \emph{impossible}, assuming the density matrix is ``all there is'' in quantum mechanics:
\begin{fact}
\label{fact:sup_detection_is_impossible}
    Fix $k$. Consider the set $S_1$ of $k$-qubit computational basis states and the set $S_2$ of $k$-qubit Fourier basis states. Then a uniformly random choice from $S_1$ has the same density matrix as a uniformly random choice from $S_2$. Despite this, $\mathsf{SupDetect}_{k,\epsilon,\Delta}$ distinguishes these two options for all $\epsilon \le 1-\frac{1}{2^k}$ and $\Delta > 0$.
\end{fact}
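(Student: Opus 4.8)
The plan is to check the two assertions separately; both follow from standard facts. For the density-matrix claim, recall that for any orthonormal basis $\{\ket{v_i}\}_{i \in [2^k]}$ of the $k$-qubit space one has $\sum_{i \in [2^k]} \ketbra{v_i}{v_i} = \mathbb{I}$, so the density matrix of a uniformly random $\ket{v_i}$ is $\frac{1}{2^k}\sum_{i} \ketbra{v_i}{v_i} = \frac{1}{2^k}\mathbb{I}$. Applying this to the computational basis gives the density matrix of a uniform draw from $S_1$, and applying it to the Fourier basis gives the density matrix of a uniform draw from $S_2$; both equal the maximally mixed state $\frac{1}{2^k}\mathbb{I}$. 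Hence the two ensembles are identical as density matrices, so no measurement acting only on a single copy of the state can tell them apart.

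For the distinguishing claim, I would simply invoke the two defining properties of $\mathsf{SupDetect}_{k,\epsilon,\Delta}$. Every element of $S_1$ is a computational basis state, so $\mathsf{SupDetect}_{k,\epsilon,\Delta}$ outputs $1$ with probability $1$ on each of them; averaging over the uniform choice, it outputs $1$ with probability $1$. Every element $\ket{f}$ of $S_2$ satisfies $|\braket{e}{f}|^2 = \frac{1}{2^k}$ for all computational basis states $\ket{e}$. Since $\epsilon \le 1-\frac{1}{2^k}$ is equivalent to $\frac{1}{2^k} \le 1-\epsilon$, the hypothesis of the second bullet of the definition holds, so $\mathsf{SupDetect}_{k,\epsilon,\Delta}$ outputs $1$ with probability at most $1-\Delta$ on each element of $S_2$; averaging, it outputs $1$ with probability at most $1-\Delta < 1$. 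The acceptance probabilities on the two ensembles thus differ by at least $\Delta > 0$, which proves that $\mathsf{SupDetect}_{k,\epsilon,\Delta}$ distinguishes them.

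I do not anticipate any real obstacle: the statement is a consistency check emphasizing that a superposition detector is necessarily a non-physical primitive. The only subtlety is the boundary of the parameter regime --- the detector's promise on ``far'' states requires squared overlap at most $1-\epsilon$ with every computational basis state, and a Fourier basis state has squared overlap exactly $\frac{1}{2^k}$, so one needs $\frac{1}{2^k} \le 1-\epsilon$; this is precisely why the hypothesis restricts to $\epsilon \le 1-\frac{1}{2^k}$. For $\epsilon$ outside that range the second bullet imposes no constraint on Fourier basis states and the argument would break, so the stated range is exactly what this proof uses.
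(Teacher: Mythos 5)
Your proof is correct. The paper states this as a \emph{fact} and omits the proof entirely, and your argument --- both ensembles average to the maximally mixed state $\frac{1}{2^k}\mathbb{I}$, while the two bullets of the definition of $\mathsf{SupDetect}_{k,\epsilon,\Delta}$ apply verbatim to $S_1$ and (since every Fourier basis state has squared overlap exactly $\frac{1}{2^k} \le 1-\epsilon$ with every computational basis state) to $S_2$ --- is exactly the intended one, including the correct identification of why the hypothesis $\epsilon \le 1 - \frac{1}{2^k}$ is needed.
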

Does it help to have more quantum witnesses?
Note that having \emph{two} copies of a quantum state allows one to detect superposition: measure each copy in the computational basis, and output $1$ iff the outcomes agree. However, in \Cref{thm:no_superposition_detectors}, the superposition detector is used on a quantum witness \emph{after a partial measurement}.
In completeness, even with $\poly(n)$ quantum witnesses, there is an inverse exponential chance of receiving two copies of the same post-measurement state. 
So as is, we cannot use the additional quantum witnesses to detect superposition unless we allow an inverse exponential promise gap, or \emph{post-selection}~\cite{aaronson2004quantum}.
We observe that \Cref{thm:no_superposition_detectors} neatly recovers the results $\text{precise}\QMA(2) = \NEXP$~\cite{pereszlényi2012multiprover} and $\text{post}\QMA(2) = \NEXP$~\cite{kinoshita2018qma2}, as well as the $\QMA(2)$ protocol with logarithmic-sized proofs for $\NP$~\cite{blier2010quantum}.

In light of this, we suggest that lower bounds to $\QMA(2)$ must go beyond the $\QMA^+$ framework, as the verifier must be able to handle quantum witnesses without directly detecting superposition.

\section{Open problems}
\begin{enumerate}
    \item Non-collapsing measurement is especially powerful with access to an inefficient quantum state: $\QMA$ with non-collapsing measurements equals $\NEXP$, but $\QCMA$ with non-collapsing measurements is in $\PSPACE$. Can one use this concept to construct a classical oracle separating $\QMA$ and $\QCMA$?
    Similarly, how does the power of $\CQP/\polycomplexity$ compare to the power of $\CQP/\qpoly = \ALL$~\cite{aaronson2018pdqpqpoly}?
    \item The position of the non-collapsing measurement in a quantum computation seems to matter in \Cref{cor:qma_cqp_equals_nexp}: non-collapsing measurement is used \emph{after} a collapsing partial measurement.
    This stymies one approach to prove $\QMA(2) = \NEXP$; see discussion in \Cref{sub:qma_plus_equals_nexp}.
    What is the power of $\BQP$ (or $\QMA$) with the ability to make non-collapsing measurements, but only \emph{before} all collapsing measurements?    
\end{enumerate}

\section*{Acknowledgements}
Thanks to John Bostanci, Bill Fefferman, Sabee Grewal, and Barak Nehoran for comments on a draft of this manuscript.

This work was done in part while visiting the Simons Institute for the Theory of Computing. RB acknowledges support from AFOSR (award number FA9550-21-1-0008). This material is based upon work partially supported by the National Science Foundation under Grant CCF-2044923 (CAREER) and by the U.S. Department of Energy, Office of Science, National Quantum Information Science Research Centers as well as by DOE QuantISED grant DE-SC0020360. KM acknowledges support from the National Science Foundation Graduate Research Fellowship Program under Grant No. DGE-1746045. Any opinions, findings, and conclusions or recommendations expressed in this material are those of the author(s) and do not necessarily reflect the views of the National Science Foundation.
\bibliography{_references}

\begin{thebibliography}{AGIMR24}
\expandafter\ifx\csname url\endcsname\relax
  \def\url#1{\texttt{#1}}\fi
\expandafter\ifx\csname doi\endcsname\relax
  \def\doi#1{\burlalt{doi:#1}{https://doi.org/#1}}\fi
\expandafter\ifx\csname urlprefix\endcsname\relax\def\urlprefix{URL }\fi
\expandafter\ifx\csname href\endcsname\relax
  \def\href#1#2{#2}\fi
\expandafter\ifx\csname burlalt\endcsname\relax
  \def\burlalt#1#2{\href{#2}{#1}}\fi

\bibitem[Aar05a]{Aaronson_2005_bqpqpoly}
S.~Aaronson.
\newblock Limitations of quantum advice and one-way communication.
\newblock {\em Theory of Computing}, 1(1):1–28, 2005.
\newblock \burlalt{arXiv:quant-ph/0402095}{https://arxiv.org/abs/quant-ph/0402095}, \burlalt{doi:10.4086/toc.2005.v001a001}{https://doi.org/10.4086/toc.2005.v001a001}.

\bibitem[Aar05b]{aaronson_dqp}
S.~Aaronson.
\newblock Quantum computing and hidden variables.
\newblock {\em Phys. Rev. A}, 71:032325, Mar 2005.
\newblock \burlalt{doi:10.1103/PhysRevA.71.032325}{https://doi.org/10.1103/PhysRevA.71.032325}.

\bibitem[Aar05c]{aaronson2004quantum}
S.~Aaronson.
\newblock Quantum computing, postselection, and probabilistic polynomial-time.
\newblock {\em Proceedings: Mathematical, Physical and Engineering Sciences}, 461(2063):3473--3482, 2005.
\newblock \burlalt{arXiv:quant-ph/0412187}{https://arxiv.org/abs/quant-ph/0412187}, \burlalt{doi:10.1098/rspa.2005.1546}{https://doi.org/10.1098/rspa.2005.1546}.

\bibitem[Aar18]{aaronson2018pdqpqpoly}
S.~Aaronson.
\newblock {PDQP/qpoly = ALL}.
\newblock \burlalt{arXiv:1805.08577}{https://arxiv.org/abs/1805.08577}.

\bibitem[Aar23]{scott_blogpost_pdqp}
S.~Aaronson.
\newblock {Quantum miscellany --- Shtetl-Optimized}.
\newblock \urlprefix\url{https://scottaaronson.blog/?p=7516}.

\bibitem[ABDFS09]{aaronson2008power}
S.~Aaronson, S.~Beigi, A.~Drucker, B.~Fefferman, and P.~Shor.
\newblock The power of unentanglement.
\newblock {\em Theory of Computing}, 5(1):1--42, 2009.
\newblock \burlalt{arXiv:0804.0802}{https://arxiv.org/abs/0804.0802}, \burlalt{doi:10.4086/toc.2009.v005a001}{https://doi.org/10.4086/toc.2009.v005a001}.

\bibitem[ABFL16]{aaronson2014space}
S.~Aaronson, A.~Bouland, J.~Fitzsimons, and M.~Lee.
\newblock The space" just above" bqp.
\newblock In {\em Proceedings of the 2016 ACM Conference on Innovations in Theoretical Computer Science}, pages 271--280, 2016, \burlalt{arXiv:1412.6507}{https://arxiv.org/abs/1412.6507}.
\newblock \burlalt{doi:10.1145/2840728.2840739}{https://doi.org/10.1145/2840728.2840739}.

\bibitem[AD14]{aaronson2013characterization}
S.~Aaronson and A.~Drucker.
\newblock A full characterization of quantum advice.
\newblock {\em SIAM Journal on Computing}, 43(3):1131--1183, 2014.
\newblock \burlalt{arXiv:1004.0377}{https://arxiv.org/abs/1004.0377}, \burlalt{doi:10.1137/110856939}{https://doi.org/10.1137/110856939}.

\bibitem[AGIMR24]{aaronson_pdqma}
S.~Aaronson, S.~Grewal, V.~Iyer, S.~C. Marshall, and R.~Ramachandran.
\newblock {PDQMA = DQMA = NEXP: QMA} with hidden variables and non-collapsing measurements.
\newblock \burlalt{arXiv:2403.02543}{https://arxiv.org/abs/2403.02543}.

\bibitem[AKT24]{akibue2024hardness}
S.~Akibue, G.~Kato, and S.~Tani.
\newblock On the hardness of conversion from entangled proof into separable one.
\newblock \burlalt{arXiv:2402.08981}{https://arxiv.org/abs/2402.08981}.

\bibitem[ALMSS98]{ALMSS98}
S.~Arora, C.~Lund, R.~Motwani, M.~Sudan, and M.~Szegedy.
\newblock Proof verification and the hardness of approximation problems.
\newblock {\em J. {ACM}}, 45(3):501--555, 1998.
\newblock \burlalt{doi:10.1145/278298.278306}{https://doi.org/10.1145/278298.278306}.

\bibitem[AR03]{aharonov2003lattice}
D.~Aharonov and O.~Regev.
\newblock {A Lattice Problem in Quantum NP}.
\newblock In {\em Proceedings of 44th Annual IEEE Symposium on Foundations of Computer Science}, pages 210--219. IEEE, 2003, \burlalt{arXiv:quant-ph/0307220}{https://arxiv.org/abs/quant-ph/0307220}.
\newblock \burlalt{doi:10.1109/SFCS.2003.1238195}{https://doi.org/10.1109/SFCS.2003.1238195}.

\bibitem[AS92]{AS92}
S.~Arora and S.~Safra.
\newblock Probabilistic checking of proofs; {A} new characterization of {NP}.
\newblock In {\em 33rd Annual Symposium on Foundations of Computer Science, Pittsburgh, Pennsylvania, USA, 24-27 October 1992}, pages 2--13. {IEEE} Computer Society, 1992.
\newblock \burlalt{doi:10.1109/SFCS.1992.267824}{https://doi.org/10.1109/SFCS.1992.267824}.
\newblock \urlprefix\url{https://doi.org/10.1109/SFCS.1992.267824}.

\bibitem[BFL90]{mip_nexp}
L.~Babai, L.~Fortnow, and C.~Lund.
\newblock Nondeterministic exponential time has two-prover interactive protocols.
\newblock In {\em Proceedings of the 31st Annual Symposium on Foundations of Computer Science}, SFCS '90, page 16–25 vol.1, USA, 1990. IEEE Computer Society.
\newblock \burlalt{doi:10.1109/FSCS.1990.89520}{https://doi.org/10.1109/FSCS.1990.89520}.

\bibitem[BFM24]{bassirian2023quantum}
R.~Bassirian, B.~Fefferman, and K.~Marwaha.
\newblock Quantum merlin-arthur and proofs without relative phase.
\newblock In {\em 15th Innovations in Theoretical Computer Science Conference}, volume 287, pages 9:1--9:19, 2024, \burlalt{arXiv:2306.13247}{https://arxiv.org/abs/2306.13247}.
\newblock \burlalt{doi:10.4230/LIPIcs.ITCS.2024.9}{https://doi.org/10.4230/LIPIcs.ITCS.2024.9}.

\bibitem[BT12]{blier2010quantum}
H.~Blier and A.~Tapp.
\newblock {A quantum characterization of NP}.
\newblock {\em Computational Complexity}, 21(3):499--510, 2012.
\newblock \burlalt{arXiv:0709.0738}{https://arxiv.org/abs/0709.0738}, \burlalt{doi:10.1007/s00037-011-0016-2}{https://doi.org/10.1007/s00037-011-0016-2}.

\bibitem[CD10]{chen2010short}
J.~Chen and A.~Drucker.
\newblock {Short multi-prover quantum proofs for SAT without entangled measurements}.
\newblock \burlalt{arXiv:1011.0716}{https://arxiv.org/abs/1011.0716}.

\bibitem[CF11]{chiesa2011improved}
A.~Chiesa and M.~A. Forbes.
\newblock {Improved soundness for QMA with multiple provers}.
\newblock \burlalt{arXiv:1108.2098}{https://arxiv.org/abs/1108.2098}.

\bibitem[Din07]{dinur2007pcp}
I.~Dinur.
\newblock {The PCP theorem by gap amplification}.
\newblock {\em Journal of the ACM (JACM)}, 54(3):12--es, 2007.
\newblock \burlalt{doi:10.1145/1236457.1236459}{https://doi.org/10.1145/1236457.1236459}.

\bibitem[FvdG99]{fuchs1998cryptographic}
C.~A. Fuchs and J.~van~de Graaf.
\newblock Cryptographic distinguishability measures for quantum-mechanical states.
\newblock {\em IEEE Transactions on Information Theory}, 45(4):1216--1227, 1999.
\newblock \burlalt{arXiv:quant-ph/9712042}{https://arxiv.org/abs/quant-ph/9712042}, \burlalt{doi:10.1109/18.761271}{https://doi.org/10.1109/18.761271}.

\bibitem[Gha23]{Gharibian_2023}
S.~Gharibian.
\newblock {Guest Column: The 7 faces of quantum NP}.
\newblock {\em ACM SIGACT News}, 54(4):54–91, Dec. 2023.
\newblock \burlalt{doi:10.1145/3639528.3639535}{https://doi.org/10.1145/3639528.3639535}.

\bibitem[GNN12]{GallNN12}
F.~L. Gall, S.~Nakagawa, and H.~Nishimura.
\newblock On {QMA} protocols with two short quantum proofs.
\newblock {\em Quantum Inf. Comput.}, 12(7-8):589--600, 2012.
\newblock \burlalt{arXiv:1108.4306}{https://arxiv.org/abs/1108.4306}, \burlalt{doi:10.26421/QIC12.7-8-4}{https://doi.org/10.26421/QIC12.7-8-4}.

\bibitem[GSSSY18]{qma2_yirka}
S.~Gharibian, M.~Santha, J.~Sikora, A.~Sundaram, and J.~Yirka.
\newblock Quantum generalizations of the polynomial hierarchy with applications to qma(2).
\newblock \burlalt{arXiv:1805.11139}{https://arxiv.org/abs/1805.11139}, \burlalt{doi:10.4230/LIPICS.MFCS.2018.58}{https://doi.org/10.4230/LIPICS.MFCS.2018.58}.

\bibitem[Har04]{harsha2004robust}
P.~Harsha.
\newblock {\em Robust PCPs of proximity and shorter PCPs}.
\newblock PhD thesis, Massachusetts Institute of Technology, 2004.
\newblock \urlprefix\url{https://dspace.mit.edu/bitstream/handle/1721.1/26720/59552830-MIT.pdf}.

\bibitem[HM13]{harrow2013testing}
A.~W. Harrow and A.~Montanaro.
\newblock Testing product states, quantum merlin-arthur games and tensor optimization.
\newblock {\em Journal of the ACM (JACM)}, 60(1):1--43, 2013.
\newblock \burlalt{arXiv:1001.0017}{https://arxiv.org/abs/1001.0017}, \burlalt{doi:10.1145/2432622.2432625}{https://doi.org/10.1145/2432622.2432625}.

\bibitem[HMTT23]{hiromasa}
R.~Hiromasa, A.~Mizutani, Y.~Takeuchi, and S.~Tani.
\newblock Rewindable quantum computation and its equivalence to cloning and adaptive postselection.
\newblock \burlalt{arXiv:2206.05434}{https://arxiv.org/abs/2206.05434}, \burlalt{doi:10.4230/LIPICS.TQC.2023.9}{https://doi.org/10.4230/LIPICS.TQC.2023.9}.

\bibitem[JW23]{qma2plus}
F.~G. Jeronimo and P.~Wu.
\newblock The power of unentangled quantum proofs with non-negative amplitudes.
\newblock {\em 55th Annual ACM Symposium on Theory of Computing}, 2023.
\newblock \burlalt{arXiv:2402.18790}{https://arxiv.org/abs/2402.18790}, \burlalt{doi:10.1145/3564246.3585248}{https://doi.org/10.1145/3564246.3585248}.

\bibitem[JW24]{jeronimo2024dimension}
F.~G. Jeronimo and P.~Wu.
\newblock {Dimension Independent Disentanglers from Unentanglement and Applications}.
\newblock In {\em 39th Computational Complexity Conference (CCC 2024)}, volume 300, pages 26:1--26:28, 2024, \burlalt{arXiv:2402.15282}{https://arxiv.org/abs/2402.15282}.
\newblock \burlalt{doi:10.4230/LIPIcs.CCC.2024.26}{https://doi.org/10.4230/LIPIcs.CCC.2024.26}.

\bibitem[Kin18]{kinoshita2018qma2}
Y.~Kinoshita.
\newblock {QMA(2) with postselection equals to NEXP}.
\newblock \burlalt{arXiv:1806.09732}{https://arxiv.org/abs/1806.09732}.

\bibitem[KMY03]{qma2_defn}
H.~Kobayashi, K.~Matsumoto, and T.~Yamakami.
\newblock Quantum merlin-arthur proof systems: Are multiple merlins more helpful to arthur?
\newblock In {\em Algorithms and Computation}, pages 189--198, 2003, \burlalt{arXiv:quant-ph/0306051}{https://arxiv.org/abs/quant-ph/0306051}.
\newblock \burlalt{doi:10.1007/978-3-540-24587-2\_21}{https://doi.org/10.1007/978-3-540-24587-2\_21}.

\bibitem[MW05]{marriott2005quantum}
C.~Marriott and J.~Watrous.
\newblock {Quantum Arthur-Merlin Games}.
\newblock {\em Computational Complexity}, 14:122--152, 2005.
\newblock \burlalt{arXiv:cs/0506068}{https://arxiv.org/abs/cs/0506068}, \burlalt{doi:10.1007/s00037-005-0194-x}{https://doi.org/10.1007/s00037-005-0194-x}.

\bibitem[NWZ09]{nagaj2009fast}
D.~Nagaj, P.~Wocjan, and Y.~Zhang.
\newblock Fast amplification of qma.
\newblock {\em Quantum Information and Computation}, 9(11):1053–1068, Nov. 2009.
\newblock \burlalt{arXiv:0904.1549}{https://arxiv.org/abs/0904.1549}, \burlalt{doi:10.26421/qic9.11-12-8}{https://doi.org/10.26421/qic9.11-12-8}.

\bibitem[Per12]{pereszlényi2012multiprover}
A.~Pereszlényi.
\newblock Multi-prover quantum merlin-arthur proof systems with small gap.
\newblock \burlalt{arXiv:1205.2761}{https://arxiv.org/abs/1205.2761}.

\bibitem[SY23]{she2022unitary}
A.~She and H.~Yuen.
\newblock {Unitary Property Testing Lower Bounds by Polynomials}.
\newblock In {\em 14th Innovations in Theoretical Computer Science Conference (ITCS 2023)}, volume 251, pages 96:1--96:17, 2023, \burlalt{arXiv:2210.05885}{https://arxiv.org/abs/2210.05885}.
\newblock \burlalt{doi:10.4230/LIPIcs.ITCS.2023.96}{https://doi.org/10.4230/LIPIcs.ITCS.2023.96}.

\end{thebibliography}
\bibliographystyle{halpha-abbrv-mod}
\clearpage

\end{document}